\newcommand{\setint}[1]{
\mathchoice{\ring{#1}}
           {{#1}^\circ}
           {{#1}^\circ}
           {{#1}^\circ}%
}
\newcommand{\contdiff}[1]{%
\if\relax\detokenize{#1}\relax
\ensuremath{\mathcal{C}}%
\else
\ensuremath{\mathcal{C}^{#1}}%
\fi%
}
\def\nashEq/{\ensuremath{\mathcal{E}}}
\newcommand{\RR}{\mathbb{R}}
\newcommand{\NN}{\mathbb{N}}
\renewcommand\epsilon{\varepsilon}
\newcommand{\fixedP}[1]{%
\ensuremath{\mathscr{F}(#1)}%
}
\newtheorem{prop}{Proposition}[section]
\newtheorem{thm}[prop]{Theorem}
\newtheorem{lemma}[prop]{Lemma}
\newtheorem{coroll}[prop]{Corollary}
\theoremstyle{definition}
\newtheorem{defn}[prop]{Definition}
\newtheorem{remark}[prop]{Remark}
\newtheorem*{remark*}{Remark}
\newtheorem*{exmpl*}{Example}
\newtheorem{exmpl}[prop]{Example}
\def\propMut/{{\ensuremath{(A')}}}
\numberwithin{equation}{section}
\begin{document}

\pagestyle{headings}

\title[Mutation Limits in Evolutionary Games]{The Stabilisation of 
Equilibria in Evolutionary Game 
Dynamics through Mutation: Mutation Limits in Evolutionary Games
}
\author{Johann Bauer}
\address{Johann Bauer, Department of Mathematics, City, University of 
London}
\email{johann.bauer@city.ac.uk}
\author{Mark Broom}
\address{Mark Broom, Department of Mathematics, City, University of 
London}
\author{Eduardo Alonso}
\address{Eduardo Alonso, Department of Computer Science, City, 
University of London}

\thanks{\emph{Acknowledgements.}
%Parts of this work were undertaken 
%while MB and JB were on 
%visit at the Department of Mathematics and Statistics at the 
%University 
%of North Carolina at Greensboro, supported by the European Union’s 
%Horizon 2020 research and innovation programme under the Marie 
%Skłodowska-Curie grant agreement No 690817, as part of the Research 
%and 
%Innovation Staff Exchange (RISE) programme. MB and JB would like to 
%thank the funding body as well as their hosts at UNCG for their kind 
%hospitality.
%
MB and JB would like to thank their hosts at the Department of 
Mathematics and Statistics at the University of North Carolina at 
Greensboro, where parts of the research were undertaken,
and acknowledge the support by the European Union’s 
Horizon 2020 research and innovation programme under the Marie 
Skłodowska-Curie grant agreement No 690817, as part of the Research and 
Innovation Staff Exchange (RISE) programme.
}
%
% This work was supported by the European Union’s Horizon 2020 
%research and innovation programme under the Marie Skłodowska-Curie 
%grant agreement No 690817, as part of the Research and Innovation 
%Staff 
%Exchange (RISE) programme.}

\keywords{replicator dynamics, evolutionary games, mutation, multiple 
populations}

\subjclass[2010]{91A22, 37C10}

\begin{abstract}
The multi-population replicator dynamics (RD) can be considered a
dynamic approach to the study of multi-player games, where it was shown 
to be related to Cross' learning, as well as of systems of coevolving
populations.
However, not all of its equilibria are Nash equilibria (NE)
of the underlying game, and neither convergence to an NE nor 
convergence in general are guaranteed.
Although interior equilibria are guaranteed to be NE, no interior 
equilibrium can be asymptotically stable in the multi-population RD, 
resulting, e.g., in cyclic orbits around a single interior NE.
We introduce a new notion of equilibria of RD, called mutation limits, 
which is based on the inclusion of a naturally arising, simple form of 
mutation, but is invariant under the specific choice of mutation 
parameters.
We prove the existence of such mutation limits for a large range of 
games, and consider a subclass of particular interest, that of 
attracting mutation limits.
Attracting mutation limits are approximated by asymptotically stable 
equilibria of the (mutation-)perturbed RD, and hence, offer an 
approximate dynamic solution of the underlying game, especially if the 
original dynamic has no asymptotically stable equilibria.
In this sense, mutation stabilises the system in certain cases and 
makes attracting mutation limits near-attainable.
Furthermore, the relevance of attracting mutation limits as a game 
theoretic equilibrium concept is emphasised by a similarity of 
(mutation-)perturbed RD to the Q-learning algorithm in the context of
multi-agent reinforcement learning.
In contrast to the guaranteed existence of mutation limits, 
attracting mutation limits do not exist in all games, raising the 
question of their characterization. 
\end{abstract}

\maketitle

\section{Introduction}

Evolutionary game theory has contributed significantly to our
understanding of a wide range of biological, e.g.,
\cite{maynard_smith_logic_1973, 
broom_game-theoretical_2013},
and social phenomena, as shown by the vast research 
into the evolution of cooperation and eusociality, e.g., 
\cite{axelrod_evolution_1981}, or the problem of collective action, e.g.,
\cite{pacheco_evolutionary_2009}.
The evolutionary game theoretic approach,
formulated in \cite{maynard_smith_logic_1973},
initially assumed a single population with intrapopulation interaction and 
competition for reproduction, resulting in the concept of the 
evolutionarily stable strategy (ESS), a refinement of the Nash 
equilibrium concept, where a 
strategy is said to be evolutionarily stable if it outperforms any other 
newcomer strategy in a population consisting almost entirely of players playing 
the former.
While the intuition underlying the notion of an ESS is dynamic,
its main definition is usually given in static terms.
In an effort to capture the dynamic intuition of the ESS concept, the 
continuous time replicator dynamics (RD), provided by 
\cite{taylor_evolutionary_1978}, relates the ESS to certain stationary 
points, 
\cite{hofbauer_evolutionary_1998}, albeit lacking a complete characterization.
In its usual formulation, it captures the single population setting with 
pairwise intrapopulation interactions.
However, just as the concept of an ESS has been extended to the 
multi-population, or multi-species, setting, e.g., 
\cite{cressman_stability_1992}, so has RD been formulated and analysed in the 
multi-population setting with intrapopulation competition (for reproduction) 
but \emph{inter}population interactions (determining reproductive advantage), 
e.g., \cite{weibull_evolutionary_1995}.
Forms of multi-population RD have been employed in the analysis of 
coevolutionary systems,
such as mutualism \cite{bergstrom_red_2003},
antagonistic coevolution of host-parasite systems 
\cite{nee_antagonistic_1989, song_host-parasite_2015},
of institutional ecosystems \cite{hashimoto_theoretical_2017},
of the evolution of a population's sex ratio \cite{argasinski_dynamics_2012},
or the coevolution of social behaviour and recognition 
\cite{smead_coevolution_2016}.
It has further been linked to Cross' learning, a simple type of 
reinforcement learning \cite{borgers_learning_1997}.

In the context of potentially very large systems, e.g., complex 
ecosystems or multi-agent systems, 
multi-population RD is of special interest because a population's 
composition 
evolves exclusively depending on the payoffs from interactions, but independent 
of any information about the other populations' payoffs, their compositions, or 
indeed their very existence.
The latter specifics affect a population's composition only through their 
effect on its payoffs.
Borrowing the term from \cite{hart_uncoupled_2003}, we call this property of RD
its \emph{uncoupledness}.

In spite of RD leading to payoff-improving or even equilibrium states in 
certain cases, there are intuitively simple games, for which neither an ESS 
exists nor RD reaches any Nash equilibrium, exhibiting periodic 
limit or general non-convergent behaviour instead:
In the usual rock-paper-scissors (RPS) game, RD has exclusively 
periodic orbits 
in the single population case and the only Nash equilibrium, an interior point, 
is not approached from any initial state, e.g., 
\cite{broom_game-theoretical_2013},
and a range of (un)-stable situations can result
\cite{hoffman_experimental_2015}.
Further, the two population setting results in periodic orbits, as well, 
and therefore does not reach the interior Nash equilibrium either.
An analogue result holds for the matching pennies game, e.g., 
\cite{weibull_evolutionary_1995}.
Indeed, it has been shown in \cite{hart_uncoupled_2003} that no 
uncoupled dynamics, in particular RD, can be converging to a Nash 
equilibrium for all possible games.
For our understanding of actual biological populations, this 
periodicity is not necessarily problematic.
On the contrary, periodic population dynamics similar to the 
single-population RPS case have been observed in nature, e.g., in the 
common side-blotched lizard (\emph{Uta stansburiana}) 
\cite{sinervo_rockpaperscissors_1996}. 
For our understanding of the conditions of behavioural convergence in 
multi-agent systems and their ability to solve large-scale problems 
such periodic behaviour is less desirable.

Although RD is intended to capture the idea of evolutionary selection,
and thus is inspired by evolution,
it treats mutation, an arguably central process of evolution and one of 
the main generators of the diversity on which selection operates, as an 
extremely rare event, to the degree that it is actually absent from the 
formulation of the dynamics, especially in the case of multiple 
populations, e.g., \cite{weibull_evolutionary_1995}.
Approaches which include mutation mainly focus on the single population case
\cite{hofbauer_selection_1985,%
burger_mutation-selection_1989,%
bomze_stability_1995,%
page_unifying_2002,%
boylan_evolutionary_1994,%
izquierdo_strictly_2011,%
allen_mutation_2012
},
consider a payoff-adjusted RD,
or a discrete time process \cite{burger_mathematical_1998},
or a single discrete population
\cite{imhof_evolutionary_2005, 
veller_finite-population_2016},
while we are not aware of an analysis of continuous-time 
multi-population RD with mutation,
apart from \cite{ritzberger_theory_1994} where certain approximations 
to multi-population RD are considered, with a different focus however 
and not linked to mutation.

We demonstrate that accounting for mutation in multi-population 
RD can fundamentally change the properties of the dynamics, 
i.e., preclude any periodicity in certain cases and, 
furthermore, guarantee convergence to states close to Nash 
equilibria, which would not be reachable under standard RD.
Note that the non-existence result in 
\cite{hart_uncoupled_2003} does not directly apply to such 
mutation dynamics, as it only considers Nash-convergence.

Our main interest, therefore, lies with the derivation of an uncoupled 
dynamics, which, on the one hand, explicitly considers mutation and, on the 
other hand, is as close as possible to standard RD, and with the analysis of 
how this mutation mechanism affects the position and stability of equilibria 
compared to the standard (multi-population) RD.
The resulting mutation mechanism with spontaneous mutations from one type to 
another is of course not appropriate for all biological mutation processes.
In a biological population, such spontaneous mutation between a finite number 
of types occurs, e.g., for single nucleotide polymorphisms, 
where alleles 
differ by only one nucleotide, with the number of possible 
single nucleotide 
polymorphisms at that position restricted to four.
Furthermore, such point mutations are known to 
occur with a non-negligible probability 
\cite{collins_rates_1994, crow_high_1997}
and can be significant factors in diseases, 
\cite{neel_mutation_1978, crow_high_1997}, e.g.,
sickle cell anaemia, \cite{marotta_human_1977, 
conner_detection_1983}, which 
also interacts with malaria parasites, \cite{luzzatto_sickle_2012},
cystic fibrosis, \cite{hamosh_cystic_1992},
or $\beta$ thalassemia, \cite{chang_beta_1979, 
sidore_genome_2015},
and further in human cancer cells, 
\cite{davies_mutations_2002, minde_messing_2011}.
There is further evidence that in \emph{Drosophila} most such 
nonsynonymous point mutations are deleterious, while the rest are 
slightly deleterious, near-neutral, or weakly beneficial, 
\cite{sawyer_prevalence_2007}, suggesting that a weak selection 
assumption as we employ can be reasonable for persisting polymorphisms.
Considered as a learning dynamics, modifications of multi-population RD 
have been shown to be linked to so-called Q-learning, a more 
sophisticated reinforcement learning algorithm,
\cite{tuyls_selection-mutation_2003}.
In particular, the resulting modification can be interpreted as a 
mutation-like term.

The inclusion of mutation should not only further our understanding of  
coevolutionary multi-population systems, such as ecosystems.
Its ability to stabilise equilibria for any non-zero mutation rate, and 
thereby make them attainable by an uncoupled dynamics, should also be 
useful in the study of game theoretical solution concepts, such as 
$\epsilon$-Nash equilibria, \cite{fudenberg_limit_1986}, and the 
formulation of conditions for the convergence of learning in 
multi-agent systems.
\\

We proceed by introducing the standard multi-population RD, i.e., without 
mutation, and recounting some stability properties of its equilibria and their 
relation to game theoretic concepts, such as Nash equilibria and evolutionary 
stability.

We then introduce mutation and give a heuristic derivation of the specific form 
of mutation we consider, defining a replicator-mutator dynamics (RMD), the 
equilibria of which we call \emph{mutation equilibria}. For fixed mutation 
parameters, we prove the existence of equilibria of RMD, their $\epsilon$-Nash 
property, and their uniqueness and asymptotic stability under very high 
mutation.

We proceed by defining the concept of limits of mutation equilibria for 
vanishing mutation, which we call \emph{mutation limits}.
Mutation limits and their properties are independent of any choice of specific 
mutation parameters.
We prove the existence of mutation limits for all systems with 
continuously differentiable fitness functions and give a sufficient 
condition for a Nash 
equilibrium to be a mutation limit.

In order to address the question of reachability of mutation limits, we define 
the notion of an \emph{attracting} mutation limit based on the asymptotic 
stability of the mutation equilibria by which it is approximated.
Such attracting mutation limits are reachable in the sense that for any choice 
of mutation parameters there is an asymptotically stable mutation equilibrium 
arbitrarily close to the mutation limit.

We further provide a sufficient condition for a Nash equilibrium to be 
an attracting 
mutation limit. In particular, all evolutionarily stable states are attracting 
mutation limits, but not all attracting mutation limits are evolutionarily 
stable, showing the notion to be a strictly weaker property than evolutionary 
stability.
We conclude by giving a necessary condition for attracting mutation limits, 
ruling out hyperbolic interior equilibria.

\section{Multi-population Replicator Dynamics}

In the following we consider the situation where we have a finite set 
of populations $I = \{ 1, 2, \ldots, N \}$ and each population $i$ 
consists of a finite number of types which we 
enumerate and denote by $S_i = \{ 1, 2, \ldots, n_i \}$.
Note that types are population-specific and numbers do not identify 
types across populations.
The composition of a population $i$ is then given as a vector $x_i$ 
such that $x_{ih} \geq 0$ gives the frequency of a type $h \in S_i$ in 
population $i$.
Thus, the set of possible compositions of population $i$ is given as:
\[ \Delta_i = \left\{ x_i \in \RR_{\geq 0}^{n_i} \Bigg| 
\sum_{h \leq n_i} x_{ih} = 1  \right\} \]
For convenience, we denote the Cartesian product of the $\Delta_i$ 
($i=1 \ldots N$) by 
$\Delta$, i.e., $\Delta = \bigtimes_{i \leq N} \Delta_i $, and 
denote by $\setint{\Delta}$ the interior of $\Delta$, i.e.,
$\forall i \leq N, h \leq n_i: x_{ih} > 0$.
Furthermore, we set $S = \{ (i,h) | i \in I \text{ and } h \in S_i \}$, 
such that $\Delta \subset \RR^S$.
The state of the multi-population model then is a description of the 
frequencies of the different types in the populations, i.e., it is 
given by some $x \in \Delta$.
\\

We assume that for each population $i \in I$ and each type in that 
population $h \in S_i$ we have a function
$f_{ih} \in \contdiff{1}( U , \RR )$, for $U \supset 
\Delta$ open, 
describing the reproductive rate or fitness $f_{ih}(x)$ of that type in 
a given state $x \in \Delta$ and we define population $i$'s average 
fitness as $\bar{f}_i(x) = \sum_{h \leq n_i} x_{ih} f_{ih}(x)$.
It should be noted that fitness is frequency-dependent in replicator 
dynamics models and not affected by population sizes.
We further assume that there is no intraspecific interaction affecting 
fitness in a type-specific manner, i.e., the fitness values of types in 
population $i$ are independent of the composition of population $i$ or 
$ \frac{\partial}{\partial x_{ik}} f_{ih} (x) = 0 $
($i \in I, h,k \in S_i$) 
in keeping with the classic normal-form game settings.\footnote{%
Note that this assumption is not essential for all results.%
}
The standard multi-population replicator dynamics, based on 
\cite{taylor_evolutionarily_1979} and developed later, e.g., 
\cite{weibull_evolutionary_1995}, is given by the 
following system of differential equations:
\begin{gather}\label{eqn:RD}\tag{RD}
\dot{x}_{ih} = \phi_{ih}(x) := x_{ih} \left( f_{ih}(x) - \bar{f}_i(x) 
\right)
\quad
( i \in I, h \in S_i )
\end{gather}
We denote by $\Phi: \RR \times \Delta \rightarrow \Delta$ the flow 
of \eqref{eqn:RD}, i.e., for $x \in \Delta$, $\Phi(\cdot, x): \RR 
\rightarrow \Delta, t \mapsto \Phi(t,x)$ is a 
solution of \eqref{eqn:RD} with $\Phi(0,x) = x$.
Due to our continuity assumption on $f$, the existence and uniqueness 
of $\Phi$ is clear, e.g., \cite[Thm 6.1]{teschl_ordinary_2012}.

\subsection{Stationary points of the replicator dynamics}

We  give a short recount of some well-known properties of
\eqref{eqn:RD} with regards to game theory, beginning with the main 
concept of game theory:

\begin{defn}[Nash equilibrium]
We call a state $x^* \in \Delta$ a \emph{Nash equilibrium} if
\[
\forall i \in I, z_i \in \Delta_i\backslash\{x_i^*\} :
\bar{f}_i(x^*) \geq \bar{f}_i( x_{-i}^*, z_i),
\]
where $(x_{-i}^*, z_i)$ denotes the state such that
\[
[x_{-i}^*, z_i]_{jk} = \begin{cases}
z_{ik}	&	\text{ if } j=i, \\
x^*_{jk}	&	\text{ otherwise }
\end{cases}.
\]
We call $x^* \in \Delta$ a \emph{strict Nash equilibrium} if all 
inequalities in the Nash equilibrium condition are strict.
\end{defn}

\begin{remark*}
It is clear that $x^* \in \Delta$ is a Nash equilibrium if and only if
\[
\forall i \in I, h \leq n_i :\: g_{ih}(x^*) := f_{ih}(x^*) - 
\bar{f}_i(x^*) \leq 0.
\]
Note that $g_{ih}(x)$ is exactly the coefficient of $x_{ih}$ in 
\eqref{eqn:RD}.
Therefore, we can denote the set of Nash equilibria by $\nashEq/ = 
\{x \in \Delta \,|\, g(x) \leq 0 \}$, where the inequality is 
component-wise.
A strict Nash equilibrium $x^* \in \Delta$ in particular is a state 
where each population consists of exactly one type, i.e., for each 
population $i \in I$ there is exactly one type $h_i$ such that 
$x^*_{ih_i} = 1$.
\end{remark*}

The following relationships between Nash equilibria and stationary 
points of \eqref{eqn:RD} are well-known:
\begin{prop}
If $x \in \Delta$ is a Nash equilibrium, then $x$ is a stationary point 
of \eqref{eqn:RD}, i.e., $\phi(x) = 0$.
\end{prop}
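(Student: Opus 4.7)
The plan is to unpack $\phi_{ih}(x) = x_{ih}(f_{ih}(x) - \bar{f}_i(x)) = x_{ih}\, g_{ih}(x)$ using the notation $g_{ih}$ introduced in the preceding remark, and check componentwise that this vanishes at a Nash equilibrium. I would split into the two obvious cases: if $x_{ih} = 0$, the product is zero trivially; if $x_{ih} > 0$, I need to deduce $g_{ih}(x) = 0$ from the Nash condition.

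The key algebraic identity to use is that for every population $i$,
\[
\sum_{h \leq n_i} x_{ih}\, g_{ih}(x) \;=\; \sum_{h \leq n_i} x_{ih} f_{ih}(x) - \bar{f}_i(x) \sum_{h \leq n_i} x_{ih} \;=\; \bar{f}_i(x) - \bar{f}_i(x) \;=\; 0,
\]
since $x_i \in \Delta_i$. Together with the remark's reformulation of the Nash property, $g_{ih}(x) \leq 0$ for all $(i,h)$, this forces every nonnegative summand $-x_{ih}\, g_{ih}(x)$ to be zero. In particular, whenever $x_{ih} > 0$, we must have $g_{ih}(x) = 0$, and hence $\phi_{ih}(x) = x_{ih} g_{ih}(x) = 0$ as required.

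Combining the two cases gives $\phi_{ih}(x) = 0$ for every $(i,h) \in S$, so $x$ is a stationary point of \eqref{eqn:RD}. There is no real obstacle here; the argument is essentially a single line once the identity $\sum_h x_{ih} g_{ih}(x) = 0$ is observed, and the only subtlety worth flagging is that the conclusion uses the sign of $g_{ih}$ at a Nash point rather than any dynamical argument.
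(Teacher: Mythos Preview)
Your argument is correct: the identity $\sum_{h} x_{ih}\,g_{ih}(x)=0$ together with the Nash inequality $g_{ih}(x)\le 0$ forces every term $x_{ih}\,g_{ih}(x)$ to vanish, which is exactly $\phi_{ih}(x)=0$. The paper does not supply its own proof of this proposition (it is stated as well-known, with a reference to Weibull for the converse), so there is nothing to compare against; your write-up is a clean self-contained justification.
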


The other direction of this implication holds for interior stationary 
points, e.g. \cite[Thm 5.2]{weibull_evolutionary_1995}:
\begin{prop}
If $x \in \setint{\Delta}$ is a stationary point of \eqref{eqn:RD}, 
then $x$ is a Nash equilibrium.
\end{prop}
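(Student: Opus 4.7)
The plan is to exploit the explicit product form $\phi_{ih}(x) = x_{ih} g_{ih}(x)$ from \eqref{eqn:RD}, together with the interior hypothesis, to force all the coefficients $g_{ih}(x)$ to vanish, from which the Nash equilibrium property falls out of the characterisation noted in the remark.

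First I would observe that since $x$ is stationary, $\phi_{ih}(x) = x_{ih}\bigl(f_{ih}(x) - \bar{f}_i(x)\bigr) = 0$ holds for every $i \in I$ and every $h \in S_i$. The interior assumption $x \in \setint{\Delta}$ means $x_{ih} > 0$ for all such $(i,h)$, so the factor $x_{ih}$ may be cancelled, yielding $g_{ih}(x) = f_{ih}(x) - \bar{f}_i(x) = 0$ for all $(i,h)$.

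Next I would invoke the remark following the Nash equilibrium definition, which states that $x^* \in \Delta$ is a Nash equilibrium exactly when $g_{ih}(x^*) \leq 0$ for all $(i,h)$. Since we have just established the stronger equality $g_{ih}(x) = 0$ for all $(i,h)$, the inequality $g_{ih}(x) \leq 0$ holds trivially, and therefore $x \in \nashEq/$.

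There is essentially no obstacle here; the only subtlety is appreciating why the interior condition is necessary, namely that on the boundary the argument fails because $\phi_{ih}(x) = 0$ can be realised by $x_{ih} = 0$ rather than by $g_{ih}(x) = 0$, so a boundary stationary point may have some $g_{ih}(x) > 0$ for types $h$ that are absent from population $i$ yet would be strictly preferred, violating the Nash condition for population $i$.
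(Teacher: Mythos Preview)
Your argument is correct and is the standard one: stationarity plus $x_{ih}>0$ forces $g_{ih}(x)=0$ for all $(i,h)$, which is even stronger than the Nash inequality $g_{ih}(x)\leq 0$ from the remark. The paper itself does not give a proof of this proposition but simply cites \cite[Thm 5.2]{weibull_evolutionary_1995}, so there is nothing further to compare.
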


\subsubsection{Stability properties of equilibria}

Our special interest lies with the attainability of Nash equilibria.
Therefore, we restate a few stability properties of Nash equilibria and 
stationary points of \eqref{eqn:RD} respectively.

\begin{defn}
We call a stationary point $x \in \Delta$ \emph{stable}, if for every 
neighbourhood $U$ of $x$ there is a neighbourhood $V \subset U$ such that
$\Phi(\RR_{\geq 0}, V) \subset U$.
We further call a stationary point $x \in \Delta$ \emph{ %(locally) 
asymptotically stable} if $x$ is stable and there is a neighbourhood 
$V$ 
of $x$ such that for 
all $y \in V$ we have $\Phi(t,y) \rightarrow x$ for $t \rightarrow \infty$. 
\end{defn}

For stable stationary points we have the following:
\begin{prop}
If $x \in \Delta$ is a stable stationary point of \eqref{eqn:RD}, then $x$ is a 
Nash equilibrium.
\end{prop}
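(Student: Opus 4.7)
The plan is to prove the contrapositive: if $x \in \Delta$ is a stationary point of \eqref{eqn:RD} but \emph{not} a Nash equilibrium, then $x$ cannot be stable. Combined with the preceding remark, failure of the Nash condition gives some index pair $(i,h)$ with $g_{ih}(x) > 0$, while stationarity ($\phi_{ih}(x) = x_{ih} g_{ih}(x) = 0$) forces $x_{ih} = 0$. So the proof reduces to showing that along any trajectory starting slightly off $x$ in the direction of the ``missing'' pure strategy $h$, the component $x_{ih}$ grows away from $0$.

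I would then exploit the continuity of $g_{ih}$: pick $\epsilon := g_{ih}(x)/2 > 0$ and an open neighbourhood $U \subset \Delta$ of $x$ on which $g_{ih}(\,\cdot\,) \geq \epsilon$. For any initial condition $y \in U$ with $y_{ih} > 0$, as long as the trajectory $\Phi(t,y)$ remains inside $U$, \eqref{eqn:RD} gives
\begin{equation*}
\dot{\Phi}_{ih}(t,y) \;=\; \Phi_{ih}(t,y)\, g_{ih}(\Phi(t,y)) \;\geq\; \epsilon\, \Phi_{ih}(t,y),
\end{equation*}
so Gronwall yields $\Phi_{ih}(t,y) \geq y_{ih}\, e^{\epsilon t}$. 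Since $\Phi_{ih}(t,y) \leq 1$, the trajectory must leave $U$ in finite time bounded by $\tfrac{1}{\epsilon}\log(1/y_{ih})$.

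To conclude instability, I would fix an arbitrary neighbourhood $U$ as above, and construct, for every candidate neighbourhood $V \subset U$ of $x$, a point $y \in V$ with $y_{ih} > 0$ whose forward orbit escapes $U$. Such a $y$ is easy to produce: since $\Delta_i$ has more than one pure strategy (otherwise $g_{ih}$ is automatically $0$), choose any $k \in S_i$ with $x_{ik} > 0$ and set $y_{ik} = x_{ik} - \delta$, $y_{ih} = \delta$ for small enough $\delta > 0$, leaving the other coordinates equal to $x$; this lies in $\Delta$, in $V$, and has $y_{ih} > 0$. The bound above then shows $\Phi(\,\cdot\,,y)$ exits $U$, contradicting stability.

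The only mildly delicate step is ensuring that the exponential-growth estimate is actually valid up to the escape time, i.e.\ that $g_{ih}$ stays bounded below by $\epsilon$ throughout. This is handled cleanly by first shrinking $U$ so that $g_{ih} \geq \epsilon$ on its closure, and running the argument up to the first exit time from $U$; before that exit time the differential inequality holds, so the escape time cannot exceed $\tfrac{1}{\epsilon}\log(1/y_{ih})$, which can be made arbitrarily large by choosing $y$ close enough to $x$, yet still finite. This is the one place to take care, but it is standard.
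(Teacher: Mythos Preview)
Your argument is correct and is essentially the standard proof (the contrapositive via exponential growth of a ``profitable'' absent strategy). The paper itself does not give a proof of this proposition at all; it simply cites \cite[Thm~5.2]{weibull_evolutionary_1995}, and your write-up is precisely the argument one finds there. One cosmetic remark: in your final paragraph, the observation that the escape-time bound $\tfrac{1}{\epsilon}\log(1/y_{ih})$ ``can be made arbitrarily large'' is true but irrelevant to the instability conclusion---all you need is that for each fixed $y$ with $y_{ih}>0$ the bound is finite, hence the orbit exits $U$. You might drop that clause to avoid suggesting there is something more to check.
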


A proof of this statement can be found in \cite[Thm 
5.2]{weibull_evolutionary_1995}.
Note that this further characterization is interesting if
$x \in \partial \Delta$, as stationary points on the boundary of 
$\Delta$ are not necessarily Nash equilibria.
Furthermore, it implies that stationary points that are not Nash 
equilibria must be unstable and thus are harder to attain under 
\eqref{eqn:RD}.
However, note that Nash equilibria do not have to be stable.
We have the following stronger characterization of asymptotically 
stable stationary points (with a proof in, e.g., \cite[Prop. 
5.13]{weibull_evolutionary_1995}):
\begin{prop}
A stationary point $x \in \Delta$ is asymptotically stable under \eqref{eqn:RD} 
if and only if $x$ is a strict Nash equilibrium.
\end{prop}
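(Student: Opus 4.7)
The plan is to prove both implications directly, leaning on the remark's observation that a strict Nash equilibrium in the multi-population setting is necessarily pure: $x^* = (e_{h_1}, \ldots, e_{h_N})$ with $f_{i h_i}(x^*) > f_{ik}(x^*)$ for every $i \in I$ and every $k \neq h_i$.

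For the direction \emph{strict Nash $\Rightarrow$ asymptotically stable} I would construct an explicit strict Lyapunov function
\[ V(x) \;:=\; \sum_{i \in I} (1 - x_{i h_i}), \]
which is nonnegative on $\Delta$ and vanishes exactly at $x^*$. Differentiating along \eqref{eqn:RD} and rewriting $f_{i h_i}(x) - \bar f_i(x) = \sum_{k \neq h_i} x_{ik}(f_{i h_i}(x) - f_{ik}(x))$ gives $\dot V(x) = -\sum_i x_{i h_i} \sum_{k \neq h_i} x_{ik}(f_{i h_i}(x) - f_{ik}(x))$. Strictness at $x^*$ plus continuity of $f$ produce $f_{i h_i}(x) - f_{ik}(x) > 0$ throughout a small enough neighbourhood of $x^*$, on which also $x_{i h_i} \geq \tfrac12$, so $\dot V \le -c\,V$ there for some $c>0$, yielding exponential attraction and hence asymptotic stability.

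For the converse, the preceding proposition already implies that an asymptotically stable $x^*$ is a Nash equilibrium, so it remains to exclude the two ways strictness can fail. In \textbf{Case A}, some population $i$ has $|T_i| \geq 2$, where $T_j := \{\ell : x^*_{j\ell} > 0\}$; the identity $x_{ih}(t) = x_{ih}(0)\exp\int_0^t (f_{ih}-\bar f_i)\,ds$ renders $F := \prod_j \Delta_{T_j}$ forward-invariant, $x^*$ lies in its relative interior, and asymptotic stability is inherited by the restricted flow, turning $x^*$ into a fully-interior asymptotically stable equilibrium of a (restricted) multi-population RD; this contradicts the fact recalled in the introduction that such equilibria cannot exist. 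In \textbf{Case B}, $x^*$ is pure but $f_{ih}(x^*) = f_{i h_i}(x^*)$ for some $h \neq h_i$; I would show that the line $\gamma(s)$ obtained by replacing $x^*_i$ with $(1-s)e_{h_i} + s e_h$ and freezing the other populations consists entirely of stationary points. Indeed, the independence of $f_{i\cdot}$ from population $i$'s own composition makes $\bar f_i(\gamma(s)) = f_{i h_i}(x^*)$, so $\phi_{i\cdot}(\gamma(s))=0$; purity of $x^*_{-i}$ forces $\bar f_j(\gamma(s)) = f_{j h_j}(\gamma(s))$, killing $\phi_{j\cdot}(\gamma(s))$ for $j \neq i$. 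Non-isolation of $x^*$ then contradicts asymptotic stability.

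The main obstacle is Case A: there one cannot, in general, produce an explicit one-parameter family of equilibria through $x^*$, and one must import the nontrivial \emph{no-interior-asymptotic-stability} result for multi-population RD, typically established through an Akin--Losert-type volume-preservation argument in logarithmic coordinates. The Lyapunov construction and Case B are otherwise essentially direct computations once strictness is unpacked.
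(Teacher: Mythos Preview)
The paper does not supply its own proof of this proposition; it simply refers the reader to Weibull, Prop.~5.13. Your sketch is correct and is essentially the standard argument one finds there: the Lyapunov function $V(x)=\sum_i(1-x_{ih_i})$ for the forward direction, and, for the converse, restriction to the supporting face combined with the no-interior-asymptotic-stability result (your Case~A) together with an explicit line of equilibria through a pure-but-tied profile (your Case~B).

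One caution: make explicit that the volume-preservation (Akin--Losert) argument you invoke in Case~A is established \emph{independently} of the proposition being proved. The sentence ``no interior equilibrium can be asymptotically stable in the multi-population RD'' that appears in the paper's abstract and in the introduction to the mutation section is most naturally read as a \emph{corollary} of this very proposition (since a strict Nash is necessarily pure), so citing it without pointing to the direct incompressibility proof would make the reasoning circular. You already flag this in your final paragraph; just be sure the write-up keeps it clear. A second minor point: when the restricted face in Case~A leaves only one population with $|T_i|\ge 2$, the no-own-dependence assumption $\partial f_{ih}/\partial x_{ik}=0$ makes that population's fitnesses constant on the face, so every point of the face is stationary and asymptotic stability fails for a simpler reason than incompressibility; it is worth handling that degenerate edge case separately rather than forcing it through the multi-population volume argument.
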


For completeness, we would like to mention the relationship between stationary 
points of \eqref{eqn:RD} and evolutionarily stable states,
where we define evolutionary stability as in \cite[p. 
166]{weibull_evolutionary_1995}, 
equivalently to \cite{cressman_stability_1992}, as follows:
\begin{defn}[Evolutionary Stability]
We call a state $x^* \in \Delta$ \emph{evolutionarily stable} if for 
all $y \in \Delta$ ($y \neq x^*$) there is some $\bar{\epsilon}_y > 0$ 
such that for all 
$\epsilon \in (0, \bar{\epsilon}_y)$ and $w = \epsilon y + 
(1-\epsilon)x^*$ we have some $i \in I$ with
\[
\bar{f}_i (x_i, w_{-i}) > \bar{f}_i (y_i, w_{-i}) .
\]
\end{defn}

It is well known that in the multi-population case the concept of 
evolutionary 
stability is equivalent to that of a strict Nash equilibrium, e.g., 
\cite[Prop. 5.1]{weibull_evolutionary_1995}:
\begin{prop}
$x \in \Delta$ is evolutionarily stable if and only if $x$ is a strict Nash 
equilibrium.
\end{prop}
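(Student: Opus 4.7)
My plan is to prove both implications directly from the definitions, exploiting that each $f_{ih}$ (and hence $\bar{f}_i$) is continuous on $\Delta$ and, for the converse direction, a ``coordinate isolation'' trick that removes any need for a limit argument.

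For the direction \emph{strict Nash $\Rightarrow$ evolutionary stability}, I would fix $y \in \Delta$ with $y \neq x^*$ and pick any population index $i$ with $y_i \neq x_i^*$ (which must exist since $y \neq x^*$). Setting $w = \epsilon y + (1-\epsilon) x^*$, we have $w \to x^*$ as $\epsilon \to 0^+$, so by continuity $\bar{f}_i(x_i^*, w_{-i}) \to \bar{f}_i(x^*)$ and $\bar{f}_i(y_i, w_{-i}) \to \bar{f}_i(x_{-i}^*, y_i)$. The strict Nash property provides the strict gap $\bar{f}_i(x^*) > \bar{f}_i(x_{-i}^*, y_i)$, which therefore persists for all $\epsilon \in (0, \bar{\epsilon}_y)$ with $\bar{\epsilon}_y > 0$ sufficiently small; this gives the required ES witness index $i$ uniformly for small $\epsilon$.

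For the converse \emph{evolutionary stability $\Rightarrow$ strict Nash}, I would fix $i \in I$ and $z_i \in \Delta_i$ with $z_i \neq x_i^*$ and choose the specific test point $y = (x_{-i}^*, z_i)$, which perturbs only the $i$-th population. For every $\epsilon > 0$, since $y_j = x_j^*$ for all $j \neq i$, the two sides of the ES inequality coincide exactly for any $j \neq i$, so the ES witness index must equal $i$. Moreover, $y_{-i} = x_{-i}^*$ gives $w_{-i} = x_{-i}^*$ identically, so the ES inequality collapses without any limit into $\bar{f}_i(x^*) > \bar{f}_i(x_{-i}^*, z_i)$, which is the strict Nash inequality at $(i, z_i)$. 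Since $i$ and $z_i$ were arbitrary, $x^*$ is a strict Nash equilibrium.

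The main obstacle here is essentially combinatorial rather than analytic: in the converse one must be careful to choose $y$ so that it agrees with $x^*$ on every population except one, because this is what simultaneously pins down the ES witness index as $i$ and kills the $\epsilon$-dependence of $w_{-i}$. Once that observation is made, the converse requires no limit, and the forward direction reduces to a single application of continuity; so there is no substantive difficulty, consistent with the result being classical.
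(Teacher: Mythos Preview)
Your argument is correct on both directions. The paper does not actually supply a proof of this proposition; it simply cites \cite[Prop.~5.1]{weibull_evolutionary_1995} as a standard reference, so there is no in-paper proof to compare against. Your write-up is a clean, self-contained verification: the forward direction uses only continuity of $\bar{f}_i$ to transfer the strict-Nash gap to the ES inequality for small $\epsilon$, and the ``coordinate isolation'' choice $y=(x_{-i}^*,z_i)$ in the converse is exactly the right device---it forces the ES witness to be $i$ (since for $j\neq i$ the two sides coincide) and makes $w_{-i}=x_{-i}^*$ identically, so the ES inequality is literally the strict-Nash inequality with no limiting step. One minor remark: in the forward direction you fix a single witness index $i$ working for all sufficiently small $\epsilon$, which is slightly stronger than what the ES definition requires (where $i$ may in principle depend on $\epsilon$), but of course this is harmless and in fact the cleaner statement.
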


Therefore, we have that strict Nash equilibria are exactly the evolutionarily 
stable states and exactly the asymptotically stable stationary points of 
\eqref{eqn:RD}.
The dynamics \eqref{eqn:RD} will therefore not have any asymptotically 
stable 
points if the underlying game does not have any strict Nash equilibria.
Furthermore, no mixed Nash equilibrium can be asymptotically stable, 
such that there is no guarantee that any Nash equilibrium is attainable 
under \eqref{eqn:RD} if the game has only mixed Nash equilibria.

\section{Introducing mutation}

We consider the effect of mutation for two reasons.
First, the idea of evolution is intricately linked with mutation and 
mutation does not seem to be an extraordinary event but is to be 
expected.
Second, a central idea in the proof that the dynamics \eqref{eqn:RD} 
has 
no interior asymptotically stable states relies on the fact that 
\eqref{eqn:RD} is divergence free (after suitable modification) and 
therefore volume preserving, \cite{hofbauer_evolutionary_1998}.
However, some games, such as the matching pennies game and the standard 
rock-paper-scissors game, have only interior equilibria, while 
describing biologically relevant interspecies interactions such as 
host-parasite systems.
The kind of mutation we consider results quite clearly in a dynamics 
with negative divergence.
Of course, this does not guarantee asymptotically stable interior 
equilibria, but it opens up the possibility of such equilibria.

We will first give a motivational heuristic derivation of our specific 
replicator-mutator dynamics from a more general form. Afterwards, we will 
consider the properties of our specific dynamics and of its equilibria.

\subsection{Replicator-Mutator Dynamics}

\subsubsection*{General mutation}
In the standard replicator dynamics \eqref{eqn:RD}, we assume that the 
offspring of individuals of some type inherit that same type.
In contrast, we consider mutation as a process by which the offspring of a 
certain individual changes into another type (of the same population) with some 
probability.
More precisely, we assume that the offspring of an $h $-type in population $i $ 
mutates to a $k $-type in the same population
with some probability
$\mu_{i k h} > 0$,
with 
$\sum_{k \leq n_i } \mu_{i k h} = 1$ for all populations $i $,
and therefore:
\[ \mu_{i h h } = 1 - \sum_{k \neq h } \mu_{i k h}\]

In order to represent overall mutation more clearly, we introduce 
\emph{relative mutation probabilities} $c_{i k h}$ and an overall 
mutation 
rate 
$\mu_i $ such that $\mu_{i k h} = \mu_i c_{i k h}$ ($h \neq k $) and 
thus:
\[
   \mu_{i h h} = 1 - \mu_i \sum_{k \neq h} c_{i k h}
\]
Here, $\mu_i $ controls the overall strength of mutation, such that for $\mu_i 
= 
0$ there is no mutation at all, without affecting relative probabilities.
We derive our specific dynamics from the general multi-population 
replicator-mutator dynamics as given in, e.g., 
\cite{page_unifying_2002},
\begin{align}
\dot{x}_{i h } & = \sum_{k \leq n_i } \mu_{i h k} x_{i k } f_{i k }(x) 
- x_{i h } \bar{f}_i (x) \label{eqn:genRMD}
\end{align}
yielding after substitution:
\begin{align}
\dot{x}_{i h } &
= x_{i h } (f_{i h }(x) - \bar{f}_i (x)) + \mu_i \sum_{k \leq n_i } 
\left( c_{i h k} x_{i k } f_{i k }(x)
 - c_{i k h} x_{i h } f_{i h }(x) \right)  %\label{eqn:genRMD}
\end{align}

This formulation emphasizes the similarity to the standard replicator dynamics 
\eqref{eqn:RD} and how $\mu_i $ determines the extent to which 
\eqref{eqn:genRMD} deviates from \eqref{eqn:RD}.

\subsubsection*{Weak selection-weak mutation limit}
Recall that \eqref{eqn:RD} is invariant under the addition of a 
background fitness for all types of a population, a property which 
\eqref{eqn:genRMD} does not have.
We therefore derive a version which is invariant under the addition of 
a constant background fitness.
For convenience, let ${s_i }^{-1}$ denote some background fitness.%
\footnote{Here, $s_i $ can be 
seen as representing the selection pressure on that particular trait.} %
Formulating \eqref{eqn:genRMD} with a modified fitness function 
$\tilde{f}_{i h } : x \mapsto f_{i h }(x) + {s_i }^{-1}$ and suitable 
substitution yields a dynamics with explicit background fitness:
\begin{align*}
\dot{x}_{i h }
& = 
\phi_{ih}(x)
+ \frac{ \mu_{i } }{s_i } \sum_{k \leq n_i } \big( s_i \left(
     c_{i h k} x_{i k } f_{i k }(x)
    - c_{i k h} x_{i h } f_{i h }(x)
  \right)
+ %\left(
    c_{i h k} x_{i k }
    - c_{i k h} x_{i h } \big)
  %\right)
\end{align*}

Analogous to \cite{hofbauer_evolutionary_1998}, we consider a weak 
selection-weak mutation limit, where the background fitness tends to 
infinity, i.e., the selection pressure goes to zero $s_i \rightarrow 
0$, and mutation occurs on the same order as selection, i.e., 
$\mu_i \rightarrow 0$, such that overall:
\[
    \frac{\mu_i }{s_i } \rightarrow M_i > 0
\]
This yields the following weak selection-weak mutation limit of 
\eqref{eqn:genRMD}, which is invariant under addition of background 
fitness,
\begin{align}
\dot{x}_{i h }
& = 
x_{i h } ( f_{i h }(x) - \bar{f}_i (x) )
+ M_i \sum_{k \leq n_i } ( c_{i h k} x_{i k } - c_{i k h} x_{i h } )
\label{eqn:weakRMD}
\end{align}
where we refer to $M_i $ as the \emph{mutation rate} in population $i$.
Note that \eqref{eqn:weakRMD} can also be derived from a discrete 
selection-mutation equation, \cite{hofbauer_evolutionary_1998}.

Additionally, we assume that the mutation probabilities only depend on 
the target type, i.e., ${c_{i h k} = c_{i h l}}$ ($\forall i , h , k 
\neq h , l \neq h $), and we can write $c_{i h}$ instead of $c_{i h k}$ 
and that the mutation rate is the same for every population, 
replacing $M_i $ with $M$,
resulting in the following:%
\footnote{
Although we consider $M$ as independent of the population, population-dependent 
mutation parameters $M_i$ are mostly compatible with the present arguments, but 
would render proofs overly technical.
}
\subsubsection*{Replicator-Mutator Dynamics}
For some fixed $c \in \setint{\Delta}$ and $M \geq 0$, the 
replicator-mutator dynamics \eqref{eqn:RMD} is given by:
\begin{gather}\label{eqn:RMD}\tag{RMD}
\dot{x}_{ih} = \phi^M_{ih}(x) := x_{ih} ( f_{ih}(x) - \bar{f}_i (x) ) + 
M ( c_{ih} - x_{ih} )
\end{gather}

It is clear that we obtain \eqref{eqn:RD} for $M = 0$.
We denote by $\Phi^M: \RR \times \Delta \rightarrow 
\Delta$ the flow of \eqref{eqn:RMD}, i.e., for $x \in \Delta$, 
$\Phi^M(\cdot, x): \RR \rightarrow \Delta, t \mapsto \Phi^M(t,x)$ is a 
solution of 
\eqref{eqn:RMD} with $\Phi^M(0,x)=x$.

\begin{remark*}
Note that $\Phi^M$ also depends on our choice of $c$.
Throughout this section, we will consider some arbitrary but 
\emph{fixed} $c \in 
\setint{\Delta}$ and the defined concepts will depend on that choice. 
However, we will proceed to properties of \eqref{eqn:RMD} which are invariant 
under the choice of $c$ later on.
\end{remark*}

\begin{defn}
We call $x \in \Delta$ with $\phi^M(x) = 0$ a \emph{mutation equilibrium} for 
$M$. For shortness, we call $x^M$ a mutation equilibrium if it is a mutation 
equilibrium for $M$. 
\end{defn}

\begin{defn}
We call a sequence $(x_n)_{n \in \NN} \subset \Delta$ a sequence of 
mutation equilibria if there is a sequence $(M_n)_{n \in \NN} \subset 
\RR_{> 0}$ with
\begin{enumerate}[i)]
\item $M_n \rightarrow 0$ for $n \rightarrow \infty$
\item and $x_n$ is a mutation equilibrium for $M_n$, i.e., $\phi^{M_n}(x_n) = 
0$, 
for all $n \in \NN$.
\end{enumerate}
For ease of notation, we write such a sequence as $(x^M)_{M > 0}$.
\end{defn}

Under suitable assumptions, such sequences represent the change of a 
coevolutionary system under decreasing mutation rates, and we will be 
especially interested in the limits of such sequences of mutation 
equilibria and in their properties.

\subsection{Existence of stationary points with mutation}

\begin{lemma}
For all $M > 0$ and $c \in \setint{\Delta}$ there is $x \in \setint{\Delta}$, 
such that $x$ is a stationary point of the replicator-mutator dynamics 
\eqref{eqn:RMD}, i.e., $\phi^M(x) = 0$.
\end{lemma}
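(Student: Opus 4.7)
The plan is to apply Brouwer's fixed point theorem to a map $T:\Delta \to \Delta$ whose fixed points are exactly the zeros of $\phi^M$, and then use the mutation term to force any such fixed point into the interior of $\Delta$.

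First I would define, for a small parameter $\alpha > 0$ to be chosen,
\[
T_{ih}(x) := x_{ih} + \alpha\,\phi^M_{ih}(x) = x_{ih}\bigl(1 + \alpha(f_{ih}(x) - \bar f_i(x) - M)\bigr) + \alpha M c_{ih}.
\]
The simplex constraint $\sum_h T_{ih}(x) = 1$ for every $i$ falls out immediately from $\sum_h\phi^M_{ih}(x) = 0$: the replicator part $\sum_h x_{ih}(f_{ih}-\bar f_i)$ vanishes by definition of $\bar f_i$, and $\sum_h(c_{ih}-x_{ih}) = 1-1 = 0$ since $c \in \Delta_i$ and $x \in \Delta_i$. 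The nonnegativity of $T_{ih}(x)$ is where the parameter $\alpha$ must be used: since $\Delta$ is compact and $f$ is continuous, the quantity $f_{ih}(x) - \bar f_i(x) - M$ is bounded on $\Delta$, so I can pick $\alpha > 0$ small enough that $1 + \alpha(f_{ih}(x) - \bar f_i(x) - M) \geq 0$ for every $i$, $h$, and $x \in \Delta$. With such an $\alpha$, $T$ is a continuous self-map of $\Delta$.

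Next I would apply Brouwer's fixed point theorem: $\Delta$ is nonempty, compact, and convex (a finite product of simplices) and $T$ is continuous, so there exists $x^* \in \Delta$ with $T(x^*) = x^*$, equivalently $\alpha\,\phi^M(x^*) = 0$, hence $\phi^M(x^*) = 0$. Finally, the crucial role of mutation is to drag the fixed point into the interior: from the identity $x^*_{ih} = T_{ih}(x^*) \geq \alpha M c_{ih}$ (the first summand in $T_{ih}$ is nonnegative by the choice of $\alpha$), together with $M > 0$ and $c_{ih} > 0$ (as $c \in \setint{\Delta}$), we conclude $x^*_{ih} > 0$ for all $(i,h) \in S$, so $x^* \in \setint{\Delta}$.

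The only genuine obstacle is Step~2, the choice of $\alpha$: one needs a single $\alpha$ that works uniformly on $\Delta$ and for all coordinates $(i,h)$, which is why the continuity of $f$ on the compact set $\Delta$ (together with finiteness of $S$) is essential. Everything else is automatic, and the interior conclusion follows cleanly from the specific form $M(c_{ih}-x_{ih})$ of the mutation term, which is what prevents the fixed point from lying on the boundary even when the unperturbed replicator dynamics has only boundary equilibria.
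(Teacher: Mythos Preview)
Your proof is correct and follows essentially the same Brouwer-based approach as the paper: the paper applies Brouwer via the forward-invariance of $\Delta$ under the flow $\Phi^M$ (citing a standard lemma that a flow on a nonempty compact convex forward-invariant set has a rest point), whereas you apply Brouwer directly to the explicit Euler map $T = \mathrm{id} + \alpha\phi^M$. Both arguments hinge on the same observation that the mutation term $M(c_{ih}-x_{ih})$ pushes the dynamics (respectively, the Euler step) strictly into $\setint{\Delta}$, forcing the fixed point to be interior; your variant is marginally more elementary in that it bypasses any discussion of the flow.
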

\begin{proof}

Note that the vector field $\phi^M$ points towards the interior of $\Delta$ for 
all $x \in \partial \Delta$.

We thus have that for all $x \in \partial \Delta$ and all $t > 0$, 
$\Phi^M(t,x) \in \setint{\Delta}$, and thus $\Delta$ is forward-invariant under 
the flow $\Phi^M$, in particular, $\Phi^M(\RR_{> 0}, \Delta) \subset 
\setint{\Delta}$.
Furthermore, it is clear that $\Delta$ is nonempty, convex and compact.
Using Brouwer's fixed point theorem, we can now use that if a nonempty, 
convex compact set is forward-invariant under a flow, then it contains 
a fixed point, e.g., \cite[Lemma 6.8]{teschl_ordinary_2012}.
With $\Phi^M(\RR_{> 0}, \Delta) \subset \setint{\Delta}$, we have that the 
fixed point has to be in $\setint{\Delta}$.
\end{proof}

The following definition, e.g., as given by 
\cite{fudenberg_limit_1986}, will be useful in our later investigation:
\begin{defn}[$\epsilon$-Equilibrium]
For some $\epsilon > 0$, we call a state $x^\epsilon \in \Delta$ an 
\emph{$\epsilon$-equilibrium} if
\[
\forall i \in I, h \leq n_i :\: f_{ih}(x^\epsilon) - \bar{f}_i(x^\epsilon) \leq 
\epsilon \,.
\]
\end{defn}

In relation to $\epsilon$-equilibria we state the following property:
\begin{lemma}
Let $x^M$ be a mutation equilibrium, then $x^M$ is an $\epsilon$-equilibrium of 
the underlying game for $\epsilon = M$, in particular:
\[
\forall i \in I, h \leq n_i :\: f_{ih}(x^M) - \bar{f}_i(x^M) < M
\]
\end{lemma}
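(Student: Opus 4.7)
The plan is to read off the claim almost directly from the defining equation of a mutation equilibrium, after first ensuring that every mutation equilibrium lies in $\setint{\Delta}$ so that division by $x^M_{ih}$ is legal.

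First I would argue interiority: if $x^M \in \Delta$ satisfies $\phi^M(x^M) = 0$ and $x^M_{ih} = 0$ for some $(i,h)$, then the $(i,h)$-coordinate of $\phi^M(x^M)$ reduces to $M c_{ih}$, which is strictly positive because $M > 0$ and $c \in \setint{\Delta}$ gives $c_{ih} > 0$. This contradicts $\phi^M(x^M) = 0$, so every mutation equilibrium satisfies $x^M_{ih} > 0$ for all $(i,h)$. (This is essentially the same inward-pointing observation used in the previous lemma, applied coordinate-wise.)

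Next, for each $(i,h)$ I would rewrite the equilibrium condition
\[
x^M_{ih} \left( f_{ih}(x^M) - \bar{f}_i(x^M) \right) + M ( c_{ih} - x^M_{ih} ) = 0
\]
and, using $x^M_{ih} > 0$, solve for the payoff gap:
\[
f_{ih}(x^M) - \bar{f}_i(x^M) = M \cdot \frac{x^M_{ih} - c_{ih}}{x^M_{ih}} = M \left( 1 - \frac{c_{ih}}{x^M_{ih}} \right).
\]
Since $c_{ih} > 0$ and $x^M_{ih} > 0$, the fraction $c_{ih}/x^M_{ih}$ is strictly positive, so the bracket is strictly less than $1$, and hence
\[
f_{ih}(x^M) - \bar{f}_i(x^M) < M,
\]
which is precisely the strict inequality in the statement and implies that $x^M$ is an $\epsilon$-equilibrium for $\epsilon = M$.

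There is no real obstacle here; the only subtlety is the interiority step, which is needed to divide by $x^M_{ih}$ and to guarantee the strict (as opposed to non-strict) inequality via $c_{ih}/x^M_{ih} > 0$. Everything else is a one-line rearrangement of the defining equation of \eqref{eqn:RMD}.
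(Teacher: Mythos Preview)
Your proof is correct and follows essentially the same approach as the paper: both use $c_{ih}>0$ together with $x^M_{ih}>0$ to extract the strict inequality from the equilibrium condition $\phi^M_{ih}(x^M)=0$. The only difference is cosmetic---the paper bounds $M(c_{ih}-x^M_{ih}) > -M x^M_{ih}$ and then divides by $x^M_{ih}$, whereas you solve explicitly for the payoff gap as $M(1 - c_{ih}/x^M_{ih})$; your explicit interiority argument is also a small clarification over the paper, which invokes $x^M\in\setint{\Delta}$ by reference to the inward-pointing observation in the preceding existence lemma.
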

\begin{proof}
For $(i,h) \in S$, we have that
\begin{align*}
0 = \phi^M_{ih}(x^M)
 = &
x^M_{ih} ( f_{ih}(x^M) - \bar{f}_i(x^M) ) + M (c_{ih} - x^M_{ih})
\\
 > &
x^M_{ih} ( f_{ih}(x^M) - \bar{f}_i(x^M) ) - M x^M_{ih}
\end{align*}
and thus, with $x^M \in \setint{\Delta}$:
\[
f_{ih}(x^M) - \bar{f}_i(x^M) < M
\]
\end{proof}

Together with the continuity of $f$, we have the following:
\begin{coroll}
Let $(x^M)_{M > 0}$ be a sequence of mutation equilibria and $x^*$ an 
accumulation point for $M \rightarrow 0$.
Then $x^*$ is a Nash equilibrium.
\end{coroll}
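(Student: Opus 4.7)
The plan is to combine the previous lemma, which bounds the exploitability of every mutation equilibrium by the mutation rate, with the continuity of the fitness functions $f_{ih}$, and then apply the $g$-characterization of Nash equilibria from the remark after the definition of Nash equilibrium.

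More concretely, first I would unpack the notion of accumulation point: since $x^*$ is an accumulation point of $(x^M)_{M>0}$ as $M\to 0$, there exists a subsequence $(x^{M_n})_{n\in\NN}$ with $M_n\to 0$ and $x^{M_n}\to x^*$ in $\Delta$. Because each $x^{M_n}$ is a mutation equilibrium for $M_n$, the preceding lemma yields, for every $(i,h)\in S$,
\[
f_{ih}(x^{M_n}) - \bar{f}_i(x^{M_n}) < M_n.
\]

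Next I would pass to the limit $n\to\infty$. Since $f_{ih}\in\contdiff{1}(U,\RR)$ and $\bar{f}_i=\sum_{k\leq n_i} x_{ik}f_{ik}$ is continuous as a finite sum of products of continuous functions, the left-hand side converges to $f_{ih}(x^*) - \bar{f}_i(x^*)$, while the right-hand side tends to $0$. Hence
\[
g_{ih}(x^*) = f_{ih}(x^*) - \bar{f}_i(x^*) \leq 0 \quad \text{for all } (i,h)\in S.
\]
By the remark following the definition of Nash equilibrium, this component-wise inequality $g(x^*)\leq 0$ is exactly the characterization of membership in $\nashEq/$, so $x^*$ is a Nash equilibrium.

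There is no real obstacle here; the only mild subtlety is ensuring that the strict inequality in the lemma degrades to a non-strict inequality in the limit (it does, which is both necessary and sufficient for the Nash characterization in the remark), and noting that $\Delta$ being closed guarantees $x^*\in\Delta$ so that the Nash condition is meaningful. The accumulation point $x^*$ need not lie in $\setint{\Delta}$, but this is irrelevant since the $g$-characterization of Nash equilibria applies on all of $\Delta$.
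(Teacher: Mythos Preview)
Your proposal is correct and matches the paper's approach exactly: the paper states the corollary as an immediate consequence of the preceding lemma ``together with the continuity of $f$'' and gives no further proof, which is precisely the argument you spell out.
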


\subsection{Mutation equilibria for high mutation rates}

We consider some specific properties under high mutation rates which 
illustrate the effect of mutation on the number and stability of 
equilibria through its effect on the Jacobian of the replicator 
dynamics.
Note that all equilibria of \eqref{eqn:RMD}, irrespective of the 
specific choice of $M > 0$, lie in the interior of $\Delta$ and that 
$\phi^M$ points inward on $\partial \Delta$.
We can therefore consider \eqref{eqn:RMD} as a dynamics on 
$\setint{\Delta}$.
We can further, for all populations $i$, replace $x_{i n_i}$ with 
$\left(1 - \sum_{k < n_i} x_{i k}\right)$, and thus proceed to the 
resulting reduced system $\tilde{\phi}^M$, which is then defined on the 
Cartesian product of the $(n_i-1)$-simplices.
For ease of notation, we will still use $\Delta$ to denote this reduced 
space.
Thus, questions regarding the stability of a mutation 
equilibrium $x^M \in \setint{\Delta}$ can be treated by considering the 
eigenvalues of the Jacobian $D\tilde{\phi}^M$.
In particular, due to the Hartman-Grobman theorem, e.g., 
\cite{perko_differential_2001, teschl_ordinary_2012}, we have the 
following useful characterization:
\begin{remark}\label{rem:asympt}
Let $x^M$ be a \emph{hyperbolic} equilibrium of \eqref{eqn:RMD}, and of 
the reduced system $\tilde{\phi}^M$ equivalently, i.e., all eigenvalues 
of $D \tilde{\phi}^M(x^M)$ have non-zero real part.
Then $x^M$ is asymptotically stable if and only if all eigenvalues of 
$D \tilde{\phi}^M(x^M)$ have negative real part, e.g., \cite[Thm 
6.10]{teschl_ordinary_2012}.
In particular, all eigenvalues of $D \tilde{\phi}^M(x^M)$ have negative 
real part, if and only if all eigenvalues of $D\tilde{\phi}(x^M)$ have 
real part smaller than $M$, due to $D\tilde{\phi}^M = D\tilde{\phi} - 
M \cdot Id$.
\end{remark}

With this observation, we obtain the following:

\begin{lemma}\label{lem:highMutation}
There is $\underline{M} \geq 0$ such that for all $M > \underline{M}$ 
the stationary points of the replicator-mutator dynamics 
\eqref{eqn:RMD} are asymptotically stable.
In particular, $D{\tilde{\phi}^M}$ is regular everywhere on $\Delta$.
\end{lemma}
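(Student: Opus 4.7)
The plan is to exploit the explicit decomposition $D\tilde{\phi}^M = D\tilde{\phi} - M \cdot \mathrm{Id}$ noted in Remark \ref{rem:asympt}, and to control the spectrum of the unperturbed Jacobian uniformly over the compact set $\Delta$.

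First I would observe that, since $f \in \contdiff{1}$ on an open neighbourhood of $\Delta$ and $\Delta$ is compact, the map $x \mapsto D\tilde{\phi}(x)$ is continuous and bounded on $\Delta$. Fixing any submultiplicative matrix norm, there exists $K \geq 0$ with $\|D\tilde{\phi}(x)\| \leq K$ for all $x \in \Delta$, and hence every eigenvalue $\lambda$ of $D\tilde{\phi}(x)$ satisfies $|\lambda| \leq K$, in particular $\mathrm{Re}(\lambda) \leq K$ uniformly in $x$.

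I would then set $\underline{M} = K$. For any $M > \underline{M}$ and any $x \in \Delta$, the spectrum of $D\tilde{\phi}^M(x)$ is the spectrum of $D\tilde{\phi}(x)$ shifted by $-M$, so every eigenvalue $\mu$ of $D\tilde{\phi}^M(x)$ has $\mathrm{Re}(\mu) \leq K - M < 0$. In particular, $0$ is not an eigenvalue anywhere on $\Delta$, so $D\tilde{\phi}^M$ is regular everywhere, which proves the second assertion. Moreover, any stationary point $x^M$ of \eqref{eqn:RMD} lies in $\setint{\Delta}$ by the existence lemma and is hyperbolic with all eigenvalues of negative real part, so Remark \ref{rem:asympt} (i.e.\ Hartman--Grobman) gives asymptotic stability.

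The only point that deserves a sanity check is that the identity $D\tilde{\phi}^M = D\tilde{\phi} - M \cdot \mathrm{Id}$ survives the elimination of the coordinates $x_{in_i}$: the mutation term $M(c_{ih} - x_{ih})$ for $h < n_i$ does not involve $x_{in_i}$, so it contributes exactly $-M \cdot \mathrm{Id}$ to the reduced Jacobian with no cross-terms. I do not anticipate any real obstacle; the argument is essentially a compactness-plus-spectral-shift observation, and the only mildly technical point is the verification of this identity on the reduced simplex, which is purely bookkeeping.
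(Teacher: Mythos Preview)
Your proposal is correct and follows essentially the same approach as the paper: bound the real parts of the eigenvalues of $D\tilde{\phi}$ uniformly on the compact $\Delta$, use the spectral shift $D\tilde{\phi}^M = D\tilde{\phi} - M\cdot\mathrm{Id}$, and invoke Remark~\ref{rem:asympt}. The only cosmetic difference is that the paper sets $\underline{M}$ equal to the supremum of the real parts of the eigenvalues directly, whereas you take the slightly larger matrix-norm bound $K$; both choices work, and yours is no less rigorous.
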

\begin{proof}
Note that all eigenvalues of $D\tilde{\phi}$ are bounded on $\Delta$, 
in particular the real parts of the eigenvalues are bounded, as well.
Then let $\underline{M}$ be an upper bound on all real parts of the 
eigenvalues of $D\tilde{\phi}$ on $\setint{\Delta}$, i.e.:
\[
\underline{M} = \sup\, \{ \Re(\lambda) \;|\; \lambda \in 
\sigma(D\tilde{\phi}(x)), x \in \Delta 
\}
\]
Let $x^M \in \setint{\Delta}$ be a mutation equilibrium for some
$M > \underline{M}$.
As noted, the Jacobian of $\tilde{\phi}^M$ satisfies
$ D{\tilde{\phi}^M}(x) = D{\tilde{\phi}}(x) - M \cdot Id $ for all $x 
\in \Delta$.
In particular, for all eigenvalues
$\lambda^M \in \sigma( D{\tilde{\phi}^M}(x^M) )$ we have that
$\lambda^M + M \in \sigma( D{\tilde{\phi}}(x^M) )$ and hence
$\Re(\lambda^M) + M \leq \underline{M}$, and thus $\Re(\lambda^M) < 0$.
Therefore, all eigenvalues of $D{\tilde{\phi}^M}(x^M)$ have strictly 
negative real parts and with remark \ref{rem:asympt}, $x^M$ is 
asymptotically stable.
\end{proof}

\begin{remark*}
Note that that the $\underline{M}$ in the previous lemma 
\ref{lem:highMutation} is independent of the choice of $c \in 
\setint{\Delta}$, thus giving a lower bound on the mutation rate above 
which all equilibria are asymptotically stable independent of $c \in 
\setint{\Delta}$.
\end{remark*}

\subsubsection{Uniqueness of mutation equilibria for high mutation rates}

For very high mutation ($M > \underline{M}$) we further obtain that mutation 
equilibria are unique and that there is a continuously differentiable function 
mapping mutation rates to mutation equilibria.
We first consider the following lemma:

\begin{lemma}\label{lemma:highMutationContFunc}
Let $c \in \setint{\Delta}$ and $\underline{M}$ from lemma 
\ref{lem:highMutation}.
Let $x^M$ be a mutation equilibrium for some $M > \underline{M}$.
Then there is a unique function
$\mathcal{M}: (\underline{M}, \infty) \rightarrow \Delta$
such that $\mathcal{M}(M) = x^M$ and for all $m \in (\underline{M}, 
\infty)$, $\mathcal{M}(m)$ is a mutation equilibrium for $m$.
In particular, $\mathcal{M}$ is continuously differentiable and 
$\mathcal{M}(m) \overset{m \rightarrow \infty}{\longrightarrow} c$.
\end{lemma}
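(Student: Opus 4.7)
The approach is via the implicit function theorem combined with a continuation argument that exploits the non-singularity of $D\tilde{\phi}^m$ guaranteed by Lemma \ref{lem:highMutation} for all $m > \underline{M}$. First, I would define $F(m,x) := \tilde{\phi}^m(x)$, viewed as a $\contdiff{1}$ map from $(\underline{M},\infty) \times \setint{\Delta}$ into the appropriate Euclidean space. By hypothesis $F(M, x^M) = 0$, and the partial derivative $D_x F(M, x^M) = D\tilde{\phi}(x^M) - M \cdot Id$ is invertible, since all its eigenvalues have strictly negative real part by the argument in Lemma \ref{lem:highMutation}. The implicit function theorem then yields a $\contdiff{1}$ function on some open interval around $M$ which sends $M$ to $x^M$ and satisfies $F(m,\mathcal{M}(m))=0$, and which is locally unique with this property.

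To extend $\mathcal{M}$ to all of $(\underline{M}, \infty)$, I would use standard continuation. Let $(a, b)$ be the maximal open interval on which a $\contdiff{1}$ extension exists, and assume for contradiction that $b < \infty$. Since $\Delta$ is compact, any sequence $m_n \nearrow b$ admits a subsequence along which $\mathcal{M}(m_{n_k})$ converges to some $x^* \in \Delta$. Continuity of $F$ forces $F(b, x^*) = 0$, so $x^*$ is a mutation equilibrium for $b$; the earlier existence lemma (whose proof shows $\phi^m$ points strictly inward on $\partial \Delta$) places $x^*$ in $\setint{\Delta}$. Lemma \ref{lem:highMutation} again gives invertibility of $D_x F(b, x^*)$ since $b > \underline{M}$, so the IFT produces a local $\contdiff{1}$ solution $\mathcal{N}$ with $\mathcal{N}(b) = x^*$. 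The local uniqueness of the IFT, applied at the points $(m_{n_k}, \mathcal{M}(m_{n_k}))$ for $k$ large, forces $\mathcal{M}$ and $\mathcal{N}$ to coincide on a left-neighbourhood of $b$, yielding a $\contdiff{1}$ extension past $b$ and contradicting maximality. The case $a > \underline{M}$ is symmetric.

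For global uniqueness, the set of points where any two such functions coincide is closed by continuity and open by the local uniqueness in the IFT, and is nonempty as it contains $M$; by connectedness of $(\underline{M}, \infty)$ the two functions therefore agree everywhere. For the asymptotic behaviour as $m \to \infty$, rearrange $\phi^m_{ih}(\mathcal{M}(m)) = 0$ to
\[
\mathcal{M}(m)_{ih} - c_{ih} = \tfrac{1}{m}\,\mathcal{M}(m)_{ih}\bigl(f_{ih}(\mathcal{M}(m)) - \bar{f}_i(\mathcal{M}(m))\bigr),
\]
and note that the right-hand side is uniformly $O(1/m)$ because $\mathcal{M}(m) \in \Delta$ and $f$ is continuous on the compact set $\Delta$.

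The main obstacle is the continuation step: a priori the IFT supplies only a local branch, and one must rule out both the possibility that the curve of equilibria runs off the boundary of $\setint{\Delta}$ before reaching a given $m$, and the possibility that it ``ends'' without a cluster point at which the IFT can be restarted. Compactness of $\Delta$ handles the second issue, while the fact that every mutation equilibrium lies in $\setint{\Delta}$, together with the invertibility of $D_x F$ on the entire interior supplied by Lemma \ref{lem:highMutation}, handles the first.
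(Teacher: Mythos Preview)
Your argument is correct and shares the paper's core strategy: apply the implicit function theorem locally and continue, using that Lemma~\ref{lem:highMutation} makes $D\tilde{\phi}^m$ regular for every $m>\underline{M}$, so the branch cannot terminate inside $(\underline{M},\infty)$. The paper organises this differently in two respects. First, it reparametrises by $w=M^{-1}$ and works with $\rho(w,x)=w\,\phi(x)+(c-x)$ on $[0,\underline{M}^{-1})$; this has the pleasant consequence that the limit $\mathcal{M}(m)\to c$ is automatic, since $w=0$ lies in the domain of the implicit branch and $\rho(0,x)=0$ forces $x=c$. Your direct $O(1/m)$ estimate achieves the same end and is perfectly fine. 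Second, rather than the ad hoc maximal-interval argument, the paper first proves two abstract lemmas (using Zorn's lemma) stating that a maximally defined implicit branch exists and that its domain can end only at points where the Jacobian degenerates, and then invokes these. Your direct continuation is more elementary; the paper's packaging is reusable. One point worth tightening in a final write-up: after obtaining the subsequential limit $x^*$ and the local branch $\mathcal{N}$ through $(b,x^*)$, the passage from agreement of $\mathcal{M}$ and $\mathcal{N}$ along the subsequence $m_{n_k}$ to agreement on a full left-neighbourhood of $b$ is exactly the open--closed connectedness argument you later give for global uniqueness, and should be invoked here as well; the paper isolates precisely this step in a separate lemma.
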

This is denoted corollary \ref{coroll:highMutationContFunc} in the 
appendix where the proof is given. \\

Note that this does not guarantee any uniqueness of equilibria, yet, 
only the uniqueness of functions passing through a given equilibrium.
The uniqueness of mutation equilibria for high mutation rates is then 
obtained in the next step from the fact that we have uniqueness at 
least for some mutation rate:
\begin{prop}\label{prop:uniquenessForHighMutation}
Let $c \in \setint{\Delta}$ and $\underline{M}$ from lemma 
\ref{lem:highMutation}.
For all $M > \underline{M}$, the replicator-mutator dynamics 
\eqref{eqn:RMD} has a unique mutation equilibrium.
The unique map $\mathcal{M}: M \mapsto x^M$ is continuously 
differentiable on $(\underline{M}, \infty)$.

\end{prop}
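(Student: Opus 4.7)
The plan is to leverage lemma \ref{lemma:highMutationContFunc} to replace the question of uniqueness at a single $M$ by the question of uniqueness in a neighbourhood of $c$ for very large mutation rates, where an implicit function theorem argument applies.

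First, suppose for contradiction that for some $M > \underline{M}$ there are two distinct mutation equilibria $x_1^M \neq x_2^M$ in $\setint{\Delta}$. By lemma \ref{lemma:highMutationContFunc} applied to each of them, we obtain two continuously differentiable maps $\mathcal{M}_1, \mathcal{M}_2 : (\underline{M}, \infty) \to \Delta$ with $\mathcal{M}_j(M) = x_j^M$ such that $\mathcal{M}_j(m)$ is a mutation equilibrium for $m$ and $\mathcal{M}_j(m) \to c$ as $m \to \infty$. If we can show $\mathcal{M}_1(m_0) = \mathcal{M}_2(m_0)$ for some $m_0 > \underline{M}$, then by the uniqueness clause of lemma \ref{lemma:highMutationContFunc} applied at that $m_0$ we would have $\mathcal{M}_1 \equiv \mathcal{M}_2$, and in particular $x_1^M = x_2^M$, a contradiction.

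The core step is thus to show that for sufficiently large $m$, the equation $\tilde{\phi}^m(x) = 0$ has a locally unique solution near $c$. The clean way is to rescale: writing $\epsilon = 1/m$ and
\[
G(x, \epsilon) := \epsilon\, \tilde{\phi}(x) + (c - x)
\]
(where $c$ is understood in the reduced coordinates), we have $G(c, 0) = 0$ and $\partial_x G(c, 0) = -\mathrm{Id}$, which is invertible. Since $G$ is continuously differentiable by the $\contdiff{1}$ assumption on $f$, the implicit function theorem yields a neighbourhood $U$ of $c$, an $\epsilon_0 > 0$ and a unique continuously differentiable map $\epsilon \mapsto x(\epsilon) \in U$ with $G(x(\epsilon), \epsilon) = 0$. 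Hence for every $m > 1/\epsilon_0$, the equation $\tilde{\phi}^m(x) = 0$ admits exactly one solution in $U$. Because $\mathcal{M}_1(m), \mathcal{M}_2(m) \to c$ as $m \to \infty$, we can pick $m_0 > \max\{\underline{M}, 1/\epsilon_0\}$ large enough so that both $\mathcal{M}_1(m_0), \mathcal{M}_2(m_0) \in U$, and this forces $\mathcal{M}_1(m_0) = \mathcal{M}_2(m_0)$, closing the argument.

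Finally, given uniqueness at each $M > \underline{M}$, the map $\mathcal{M} : M \mapsto x^M$ is well defined on $(\underline{M}, \infty)$, and its continuous differentiability is just the continuous differentiability of the curve furnished by lemma \ref{lemma:highMutationContFunc} (with any fixed starting equilibrium). The main obstacle here is really the rescaling step: without it, the natural place to invoke the implicit function theorem is at $m = \infty$, which is outside the parameter domain, so one must introduce $\epsilon = 1/m$ to make the limit $m \to \infty$ accessible as an ordinary point of differentiation.
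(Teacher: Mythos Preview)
Your proof is correct, and its overall architecture matches the paper's: produce, via lemma \ref{lemma:highMutationContFunc}, a curve of equilibria through each putative equilibrium, pin the curves together at some particular parameter value, and then invoke the uniqueness clause of that lemma to conclude they agree on all of $(\underline{M},\infty)$.

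Where you differ is in how you pin the curves together. You rescale to $\epsilon=1/m$, set $G(x,\epsilon)=\epsilon\,\tilde{\phi}(x)+(c-x)$, and apply the implicit function theorem at $(c,0)$, where $\partial_x G(c,0)=-\mathrm{Id}$; this yields local uniqueness of the equilibrium in a neighbourhood of $c$ for all sufficiently large $m$, and since both curves $\mathcal{M}_1,\mathcal{M}_2$ tend to $c$ they must eventually coincide there. The paper instead introduces a Banach fixed-point argument: for any $M'$ exceeding a Lipschitz constant $L_\phi\geq\underline{M}$ of $\phi$, the Euler-type map $x\mapsto x+s\,\phi^{M'}(x)$ is a contraction on $\Delta$ for suitable $s>0$, so the mutation equilibrium at $M'$ is globally unique and every curve from lemma \ref{lemma:highMutationContFunc} must pass through it.

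Your route is arguably more economical, since the rescaled map $G$ is exactly the function $\rho$ of \eqref{eq:rho} already used in the appendix, so no new tool is needed beyond the IFT machinery set up for lemma \ref{lemma:highMutationContFunc}. The paper's contraction argument, on the other hand, gives global uniqueness at $M'$ directly (not merely in a neighbourhood of $c$) and makes the quantitative threshold $L_\phi$ explicit, at the cost of bringing in an additional fixed-point principle.
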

This is denoted proposition \ref{appProp:uniquenessForHighMutation} in 
the appendix where the proof is given.

\begin{remark}
Note that the main achievement of proposition 
\ref{prop:uniquenessForHighMutation} is to extend the 
uniqueness of equilibria beyond any Lipschitz constant of 
$\tilde{\phi}$ to $(\underline{M}, \infty)$, i.e., to the interval 
where $D\tilde{\phi}^M$ is guaranteed to be regular.
Furthermore, if $D\tilde{\phi}^M(x^M)$ is invertible for all 
$M \in (a,\infty)$ and corresponding mutation equilibria $x^M$ then the 
uniqueness extends to $(a, \infty)$. In fact, if $a = 0$ then there is 
a unique sequence of mutation equilibria $(x^M)_{M > 0}$ for $c \in 
\setint{\Delta}$ since it is induced by the function $\mathcal{M}$.
\end{remark}

For a fixed $c \in \setint{\Delta}$ and a sufficiently high 
mutation rate, the unique mutation equilibrium will be arbitrarily 
close to $c$.
Therefore, if we were interested in finding the mutation equilibrium 
for a sufficiently high mutation rate, we could choose an initial point 
close to $c$ and the dynamics \eqref{eqn:RMD} would converge to the 
asymptotically stable mutation equilibrium.
The uniqueness on $(\underline{M}, \infty)$ further enables us to lower 
the mutation rate almost to $\underline{M}$ without losing uniqueness 
and asymptotic stability.

\section{Mutation limits}

In our previous considerations, we assumed fixed relative mutation 
probabilities $c \in \setint{\Delta}$.
In particular, certain effects could depend on the specific choice of $c$, 
e.g., if we picked $c$ to coincide with a Nash equilibrium of the underlying 
game.
However, we are interested in properties that are independent of the specific 
choice of $c$.
To this end, we introduce the following definition:

\begin{defn}[Mutation Limit]
We call a connected compact set $X \subset \Delta$ a 
\emph{mutation limit}, if for all $c \in \setint{\Delta}$ there is a 
sequence of mutation equilibria $(x^M)_{M > 0} \subset \Delta$ that 
converges to an element of $X$ and $X$ contains no proper subset with 
these properties.
We call $x \in \Delta$ a \emph{mutation limit (point)} if the singleton 
set $\{x\}$ is a mutation limit.
\end{defn}

\begin{remark*}
It is clear that every mutation limit $X \subset \Delta$ is a subset of 
the set of Nash equilibria, $\nashEq/$, of the underlying game, as 
the limit of \emph{any} sequence of mutation equilibria is a Nash 
equilibrium.
Furthermore, if it exists, it must be contained in a connected 
component of $\nashEq/$.
\end{remark*}

\subsection{General existence of mutation limits}

A question that arises from the definition is that of the existence of 
mutation limit points.
While we have shown that for any fixed $c \in \setint{\Delta}$ and any 
mutation rate $M > 0$ there is a corresponding mutation equilibrium and 
therefore the Bolzano-Weierstrass theorem guarantees the existence of a 
limit for vanishing mutation, this limit need not be independent of the 
choice of $c$, and indeed it could be possible that there is no 
mutation limit at all, neither a singleton set nor otherwise.
The question, therefore, is whether every game has at least one 
mutation limit point.
To this question, we can give a negative answer, as the following 
example shows:
\begin{exmpl}
Consider a two-player game with the following payoff structure:
\begin{center}
\begin{tabular}{c|cc}
		&	$C_1$	&	$C_2$	\\ \hline
$R_1$	&	1, 0	&	0, 1	\\
$R_2$	&	0, 1	&	1, 0	\\
$R_3$	&	0, 1	&	1, 0	\\
\end{tabular}
\end{center}

It is clear that any Nash equilibrium of the game has the form
\[ \left( \left(\frac{1}{2},\, \frac{t}{2},\, \frac{1-t}{2} \right), 
\left(\frac{1}{2},\, \frac{1}{2} \right) \right) \]
with $t \in [0,1]$, where we give the strategy of the row player first.
Excluding a few special choices of $c \in \setint{\Delta}$, for 
any generic $c$ given as $(c_{R,1}, c_{R,2}, 
c_{R,3}, c_{C,1}, c_{C,2})$, every sequence of mutation equilibria will 
converge to a Nash equilibrium of the above form with:
\[ t = \frac{ c_{R,2} }{c_{R,2} + c_{R,3}} \]
It is therefore evident that this game has no mutation limit point, 
i.e., there is no Nash equilibrium that is approached by mutation 
equilibria for all choices $c \in \setint{\Delta}$.
However, for any Nash equilibrium $x$ of the above form with $t \in (0,1)$ 
there is a $c \in \setint{\Delta}$ such that $x$ is approached by a 
sequence of mutation equilibria. Therefore, the set of Nash equilibria 
is indeed a mutation limit.
\end{exmpl}

In the above example, the set of all Nash equilibria turns out to be a 
mutation limit.
However in general, the set of Nash equilibria need not be connected.
In this context, the following result answers the question about the 
general existence of mutation limits:
\begin{prop}\label{prop:mutationlimitsingames}
For every $f \in \contdiff{1}(U \supset \Delta, \RR^S)$ there is a 
mutation 
limit.
\end{prop}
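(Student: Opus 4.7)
My strategy is to establish existence by applying Zorn's lemma to the family of connected compact subsets of $\Delta$ whose intersection with the accumulation set of mutation equilibria is non-empty for every choice of $c$, and to extract a minimal such set.

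For each $c \in \setint{\Delta}$, let $L_c \subset \Delta$ denote the set of limits of convergent sequences of mutation equilibria (with relative mutation $c$). The existence lemma for \eqref{eqn:RMD} together with compactness of $\Delta$ gives $L_c \neq \emptyset$, and a standard diagonal argument shows $L_c$ is closed. The existence condition in the definition of a mutation limit amounts to $X \cap L_c \neq \emptyset$ for every $c$; accordingly, set
\begin{equation*}
\mathcal{F} := \{ X \subset \Delta : X \text{ is non-empty, connected, compact, and } X \cap L_c \neq \emptyset \text{ for all } c \in \setint{\Delta} \}.
\end{equation*}
Since $\Delta$ itself belongs to $\mathcal{F}$, the family is non-empty; moreover a mutation limit is precisely a minimal element of $(\mathcal{F}, \subset)$.

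I order $\mathcal{F}$ by reverse inclusion and verify that every chain $(X_\alpha)_\alpha$ has an upper bound in this order, namely $X_* := \bigcap_\alpha X_\alpha$. As a nested intersection of non-empty compact connected subsets of the Hausdorff space $\Delta$, $X_*$ is itself non-empty, compact, and connected by a classical theorem of general topology. Moreover, for each $c$, the sets $X_\alpha \cap L_c$ form a totally ordered chain of non-empty compact sets, so the finite intersection property gives $X_* \cap L_c = \bigcap_\alpha (X_\alpha \cap L_c) \neq \emptyset$. Therefore $X_* \in \mathcal{F}$, and Zorn's lemma supplies a maximal element of $(\mathcal{F}, \supset)$ --- equivalently a minimal element of $(\mathcal{F}, \subset)$ --- which is a mutation limit.

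The principal obstacle is the chain-preservation step, which rests on two classical but non-trivial facts: that intersections of chains of compact connected subsets of a Hausdorff space remain connected (separating a hypothetical disconnection by disjoint open neighbourhoods eventually traps some $X_{\alpha_0}$ entirely in one of them, contradicting its connectedness), and that the existence property survives because $L_c$ is compact and the chain $\{X_\alpha \cap L_c\}_\alpha$ is totally ordered and non-empty throughout. Neither fact extends to general (non-chain) families, so the total-ordering in the Zorn step is essential. Once these topological inputs are admitted, the proof reduces to the verification that $\Delta \in \mathcal{F}$, which is immediate.
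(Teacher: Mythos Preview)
Your argument is correct. The key steps --- that $L_c$ is closed (hence compact), that $\Delta \in \mathcal{F}$, that nested intersections of compact connected sets in a Hausdorff space remain connected, and that the finite intersection property preserves $X_* \cap L_c \neq \emptyset$ along chains --- are all valid, and together they yield a minimal element of $\mathcal{F}$, which is exactly a mutation limit under the paper's definition.

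Your route is genuinely shorter than the paper's. The paper first enlarges the parameter space from constant $c \in \setint{\Delta}$ to function-valued $c \in H = \contdiff{1}(\Delta, \RR_{>0}^S)$, defines a stronger property $(A')$, shows that $\nashEq/$ has $(A')$, extracts a minimal compact $(A')$-set by Zorn, and then uses a Kinoshita-style partition-of-unity argument (exploiting the convexity of $H$) to prove that this minimal set is connected. Only after all that does the paper invoke essentially the same chain-intersection argument you use (its Lemma \ref{lemma:minimalconnectedA}) to pass to a minimal connected compact set with property $(A)$. You bypass the entire $(A')$/Kinoshita detour by observing that $\Delta$ itself is already a connected compact set with property $(A)$, so one can run the Zorn-with-connectedness argument from $\Delta$ directly. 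What the paper's longer approach buys is a framework that generalises more readily to other dynamics (as the authors remark), and it locates the mutation limit inside a minimal $(A')$-set within $\nashEq/$ --- but for the bare existence statement of the proposition, your argument suffices and is more economical.
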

\begin{proof}
See appendix \ref{sec:proof:mutationlimitsingames}.
\end{proof}

Note that this result does not require that there is no intraspecies 
interaction, i.e., it does not require $\frac{\partial}{\partial 
x_{ik}} f_{ih}(x) = 0$ ($\forall i \in I, h,k \in S_i, x \in \Delta$).
In fact, the proof can be quite easily generalized to other, not 
necessarily replicator dynamics.

From proposition \ref{prop:mutationlimitsingames}, together with the 
prior remark that a mutation limit is contained in a connected 
component of $\nashEq/$, 
we obtain the following existence result for dynamics with only a 
finite number of Nash equilibria:
\begin{coroll}
Let $f \in \contdiff{1}(U \supset \Delta, \RR^S)$ such that the set of 
Nash 
equilibria, $\nashEq/$, is finite.
Then all mutation limits are mutation limit points and there is at 
least one mutation limit point.
\end{coroll}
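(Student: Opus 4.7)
The plan is to combine Proposition \ref{prop:mutationlimitsingames} with the structural remark that every mutation limit lies in a single connected component of $\nashEq/$, exploiting the topological fact that a finite Hausdorff space has only singleton connected components.

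First I would invoke Proposition \ref{prop:mutationlimitsingames} directly: since $f \in \contdiff{1}(U, \RR^S)$, there is at least one mutation limit $X \subset \Delta$. So the existence part reduces to showing this $X$ is a singleton. Next I would recall, from the remark immediately following the definition of mutation limits, that any mutation limit is a connected compact subset of $\nashEq/$ (it must be contained in a connected component of $\nashEq/$, since it is itself connected and contained in $\nashEq/$).

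The key step is then the observation that if $\nashEq/ \subset \Delta$ is finite, then each of its connected components is a single point. This is because $\Delta$ is Hausdorff, so $\nashEq/$ with the subspace topology is a finite Hausdorff space and therefore discrete; in a discrete space the only connected subsets are the empty set and singletons. Since any mutation limit $X$ is a nonempty connected subset of $\nashEq/$, it must be a singleton $\{x^*\}$, which by definition means $x^*$ is a mutation limit point. This proves simultaneously that all mutation limits are mutation limit points.

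There is no real obstacle here; the argument is a two-line consequence of the preceding proposition and the remark on the structure of mutation limits, once one notes that finiteness of $\nashEq/$ forces its connected components to be singletons. The only small thing to be careful about is to explicitly note that a mutation limit is nonempty (which follows from its definition, since it is the closure/compact set containing accumulation points of sequences of mutation equilibria that are guaranteed to exist by earlier results), so that being connected and contained in a discrete space indeed forces it to be a single point.
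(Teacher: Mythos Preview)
Your proposal is correct and matches the paper's own reasoning: the paper presents this corollary without a separate proof, noting only that it follows from Proposition~\ref{prop:mutationlimitsingames} together with the remark that a mutation limit is contained in a connected component of $\nashEq/$. You have simply made explicit the one missing observation, namely that a finite subset of the Hausdorff space $\Delta$ is discrete, so its connected components are singletons.
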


Note that the finiteness condition is particularly important for 
fitness functions that are not derived from finite normal-form games.

\subsubsection{A sufficient condition for mutation limits}

We can further guarantee that regular Nash equilibria, introduced in 
\cite{harsanyi_oddness_1973}, cf. also \cite{van_damme_stability_1991}, 
are mutation limit points, where we employ the following equivalent 
definition, \cite{ritzberger_theory_1994}:

\begin{defn}
We call a Nash equilibrium $x \in \Delta$ a \emph{regular equilibrium} 
if the reduced Jacobian of \eqref{eqn:RD} at $x$, $D\tilde{\phi}(x)$, 
has full rank.
\end{defn}

In particular, all strict Nash equilibria are regular, \cite[Cor. 
2.5.3]{van_damme_stability_1991}.

\begin{lemma}
Let $x^*$ be a regular equilibrium. Then $x^*$ is a mutation limit, i.e.,
for all $c \in \setint{\Delta}$, there is a sequence of mutation 
equilibria, $(x^M)_{M > 0}$, such that $x^M \rightarrow x^*$ for $M \rightarrow 
0$.
\end{lemma}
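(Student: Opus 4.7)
The plan is to apply the implicit function theorem to $\tilde{\phi}^M(x)=0$ near $(M,x)=(0,x^*)$, obtain a smooth branch of zeros parametrised by $M$, and then argue separately that this branch stays in $\Delta$ for small $M>0$.

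First, I would pass to the reduced system described before Lemma~\ref{lem:highMutation}. After relabelling types (if necessary) so that $x^*_{i,n_i}>0$ for every $i\in I$---possible since each $x^*_i\in\Delta_i$ has $\sum_h x^*_{ih}=1$---define
\[
F(M,x) := \tilde{\phi}(x) + M(\tilde{c} - \tilde{x}),
\]
a $\contdiff{1}$ map from $\RR\times\tilde{\Delta}$ to $\RR^{\sum_i(n_i-1)}$. Because any Nash equilibrium is a stationary point of \eqref{eqn:RD}, we have $F(0,x^*)=0$; regularity says $D_xF(0,x^*)=D\tilde{\phi}(x^*)$ is invertible. The implicit function theorem then yields some $\delta>0$ and a $\contdiff{1}$ curve $\mathcal{X}:(-\delta,\delta)\to\tilde{\Delta}$ with $\mathcal{X}(0)=x^*$ and $F(M,\mathcal{X}(M))\equiv 0$.

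Next, I would verify that $x(M):=\mathcal{X}(M)$, completed to the full vector via $x(M)_{i,n_i}:=1-\sum_{h<n_i}x(M)_{ih}$, actually lies in $\Delta$ for all small $M>0$. For components with $x^*_{ih}>0$, continuity keeps $x(M)_{ih}>0$; this covers $h=n_i$ by the reindexing. For components with $x^*_{ih}=0$ (necessarily $h<n_i$), I would rewrite $\phi^M_{ih}(x(M))=0$ as
\[
x(M)_{ih}\bigl[\,M + \bar{f}_i(x(M)) - f_{ih}(x(M))\,\bigr] = M c_{ih}.
\]
The crucial point---and what I regard as the main obstacle---is to upgrade the Nash inequality $f_{ih}(x^*)\leq\bar{f}_i(x^*)$ to a strict one at absent types. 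This follows from regularity: under the no-intraspecific-interaction assumption, $x^*_{ih}=0$ makes every entry of the $(i,h)$-row of $D\tilde{\phi}(x^*)$ vanish except the diagonal one, which equals $f_{ih}(x^*)-\bar{f}_i(x^*)$; if this were zero, the row would vanish and full rank would fail, contradicting regularity. With $\bar{f}_i(x^*)-f_{ih}(x^*)>0$ in hand, the bracketed factor is positive for small $M$, so
\[
x(M)_{ih} = \frac{M c_{ih}}{M + \bar{f}_i(x(M)) - f_{ih}(x(M))} > 0,
\]
and in fact $x(M)_{ih}=O(M)$.

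Finally, for any $c\in\setint{\Delta}$, picking a sequence $M_n\searrow 0$ inside $(0,\delta)$ produces mutation equilibria $x^{M_n}:=x(M_n)\in\Delta$ converging to $x^*$, as required. In summary, the IFT step is entirely routine; all the substantive work is the brief Jacobian-row calculation that converts regularity into strict suboptimality at absent types, which is exactly what prevents the IFT branch from leaving $\Delta$ as $M\to 0$.
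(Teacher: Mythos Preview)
Your argument is correct and follows the same overall strategy as the paper: apply the implicit function theorem at $(0,x^*)$ using the invertibility of $D\tilde{\phi}(x^*)$, then verify that the resulting branch remains in $\Delta$ for small $M>0$ by showing $g_{ih}(x^*)<0$ whenever $x^*_{ih}=0$. The only difference is in how this strict inequality is obtained: the paper differentiates the identity $\phi^M_{ih}(\mu(M))=0$ in $M$ at $M=0$ to get $g_{ih}(x^*)\,\mu_{ih}'(0)=-c_{ih}<0$, whereas you read it off directly from the Jacobian row (which, incidentally, vanishes off the diagonal simply because $x^*_{ih}=0$, so the no-intraspecific-interaction assumption is not actually needed there).
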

\begin{proof}
Note that $D\tilde{\phi}(x^*)$ is invertible and therefore, by the 
implicit function theorem, for every $c \in \setint{\Delta}$, there is 
a continuously differentiable $\mu: (-\epsilon, \epsilon) 
\rightarrow \RR^N$ for some $\epsilon > 0$, such that for $M \in 
(-\epsilon, \epsilon)$ we have that $\tilde{\phi}^M(\mu(M)) = 0$.
Of course, negative values of $M$ are not interpretable as mutation 
rates and we consider them here only for technical reasons of 
differentiability at $0$.

If $x^* \in \setint{\Delta}$, then it is clear that we can choose $\epsilon$ 
such that $\mu( [0,\epsilon] ) \subset \Delta$, and therefore a sequence of 
mutation equilibria $(x^M)_{M > 0} \subset \Delta$ with $x^M \rightarrow x^*$ 
for $M \rightarrow 0$.

Suppose that $x^* \in \partial \Delta$ and for some $(i,h) \in S$ we have 
$x^*_{ih} = 0$.
Note that $\mu$ is continuously differentiable and therefore for
$M \in (-\epsilon, \epsilon)$,
\begin{align*}
0
{} = {}&
\frac{d}{d M} \phi^M_{ih} (\mu(M))
\\
{} = {}&
\frac{d}{d M} \bigg( \mu_{i h} (M) g_{ih} (\mu(M)) \bigg)
+ \frac{d}{d M} \bigg( M (c_{ih} - \mu_{ih}(M)) \bigg)
\\
{} = {}&
g_{ih} (\mu(M)) \frac{d}{d M} \mu_{i h} (M)
+ \mu_{i h} (M) \frac{d}{d M} g_{ih} (\mu(M)) \\
& + (c_{ih} - \mu_{ih}(M)) + M (c_{ih} - \frac{d}{d M} \mu_{ih} (M))
\end{align*}
and hence for $M = 0$,
\begin{align*}
0
{} = {}&
\frac{d}{d M} \phi^M_{ih} (\mu(M))\big|_{M = 0}
\\
{} = {}&
g_{ih} (\mu(0)) \frac{d}{d M} \mu_{i h} (0)
+ \mu_{i h} (0) \frac{d}{d M} g_{ih} (\mu(0)) + (c_{ih} - \mu_{ih}(0)) + 0 
(c_{ih} 
-  \frac{d}{d M} \mu_{ih} (0))
\\
{} = {}&
g_{ih} (x^*) \frac{d}{d M} \mu_{i h} (0)
+ \underbrace{x^*_{ih}}_{= 0} \frac{d}{d M} g_{ih} (x^*) + (c_{ih} - 
\underbrace{x^*_{ih}}_{= 0})
\\
{} = {}&
g_{ih} (x^*) \frac{d}{d M} \mu_{i h} (0) + c_{ih}
> 
g_{ih} (x^*) \frac{d}{d M} \mu_{i h} (0) \;.
\end{align*}
Thus, with $x^*$ being a Nash equilibrium, we have $g_{ih} (x^*) \leq 0$ and 
therefore $\frac{d}{d M} \mu_{i h} (0) \geq 0$. Because of the strict 
inequality, we even have $g_{ih} (x^*) < 0$ and $\frac{d}{d M} \mu_{i h} (0) > 
0$.
Therefore, we can choose $\epsilon$ such that $\mu([0, \epsilon)) \subset 
\Delta$ and a sequence of mutation equilibria converging to $x^*$.
\end{proof}

\begin{remark*}
It should be noted that the proof of the above result shows that there is a 
continuously differentiable function mapping mutation rates to mutation 
equilibria and that this function is unique.
In other words, given a $c \in \setint{\Delta}$, the sequence approaches $x^*$ 
in a unique manner.
\end{remark*}

\subsection{Attracting Mutation Limits}

Up to this point we have considered equilibria (or sets of equilibria) 
of \eqref{eqn:RD} such that for any $c \in \setint{\Delta}$ and 
mutation rate $M>0$ a mutation equilibrium of the respective 
\eqref{eqn:RMD} would be located arbitrarily close, depending on $M$.
We have so far ignored the stability properties of the mutation 
equilibria arising nearby.
If the mutation equilibrium arising nearby happens to be asymptotically 
stable for some mutation rate $M>0$ and some $c \in \setint{\Delta}$, 
then under suitable initial conditions the system will converge to a 
state close to the mutation limit.
However, as with the notion of mutation equilibria, such behaviour of 
the system is mostly of interest if it does not depend on a lucky 
choice of $c$, in particular if nearby mutation 
equilibria turn out to by asymptotically stable for every choice of $c$.
In this case, the mutation limit would be approximated arbitrarily 
close in all \eqref{eqn:RMD} only depending on $M > 0$.
This idea motivates the following formal definition:

\begin{defn}[Attracting Mutation Limit]
We call a mutation limit $X \subset \Delta$ \emph{attracting} if for 
every $c \in \setint{\Delta}$ and every sequence of mutation equilibria 
$(x^M)_{M > 0}$ that converges to an element of $X$, 
there is $m > 0$ such that for all $M < m$, $x^M$ is asymptotically 
stable.
We call $x \in \Delta$ an \emph{attracting} mutation limit (point) if 
the singleton set $\{x\}$ is an attracting mutation limit.
\end{defn}

\subsubsection{A sufficient condition for attracting mutation limits}

It is known that if $x^*$ is a strict Nash equilibrium, then 
$D\tilde{\phi}(x^*)$ has only real, strictly negative eigenvalues, 
e.g., \cite[Lemma 1]{ritzberger_theory_1994}, and $x^*$ is 
therefore regular and thus a mutation limit.
Furthermore, we can show that $x^*$ is an attracting mutation limit:
\begin{lemma}
Let $x^*$ be a strict Nash equilibrium. Then $x^*$ is an attracting 
mutation limit.
\end{lemma}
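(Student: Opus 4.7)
The plan is to combine the spectral information about $D\tilde{\phi}(x^*)$ at a strict Nash equilibrium with the continuous dependence of eigenvalues on the Jacobian entries, then invoke the Hartman-Grobman characterisation of remark \ref{rem:asympt}.

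First, since every strict Nash equilibrium is regular (\cite[Cor.~2.5.3]{van_damme_stability_1991}), the preceding lemma already guarantees that $x^*$ is a mutation limit; so only the attracting property remains to be checked. For this, I would invoke the cited fact (e.g.\ \cite[Lemma~1]{ritzberger_theory_1994}) that at a strict Nash equilibrium the reduced Jacobian $D\tilde{\phi}(x^*)$ has only real, strictly negative eigenvalues. Let $\lambda > 0$ be chosen so that every such eigenvalue is at most $-2\lambda$.

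Next, using continuity of the map $x \mapsto D\tilde{\phi}(x)$ together with the standard continuous dependence of the roots of a polynomial on its coefficients, I would pick a neighbourhood $U$ of $x^*$ in $\Delta$ such that every eigenvalue of $D\tilde{\phi}(x)$ has real part less than $-\lambda$ for all $x \in U$. Now, given any $c \in \setint{\Delta}$ and any sequence of mutation equilibria $(x^M)_{M > 0}$ with $x^M \rightarrow x^*$ for $M \rightarrow 0$, there exists $m > 0$ such that $x^M \in U$ for all $M < m$. For such $M$, the identity $D\tilde{\phi}^M(x^M) = D\tilde{\phi}(x^M) - M \cdot Id$ shows that every eigenvalue of $D\tilde{\phi}^M(x^M)$ has real part strictly less than $-\lambda - M < 0$, so $x^M$ is hyperbolic and, by remark \ref{rem:asympt}, asymptotically stable. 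Hence $\{x^*\}$ is an attracting mutation limit.

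The only step requiring some care is the continuous dependence of eigenvalues, but it is standard perturbation theory for matrices; no essential obstacle arises, and the result reduces to bookkeeping around the negative spectrum of $D\tilde{\phi}(x^*)$.
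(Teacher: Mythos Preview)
Your argument is correct and mirrors the paper's own proof: both use regularity of strict Nash equilibria to get the mutation-limit property, invoke the real negative spectrum of $D\tilde{\phi}(x^*)$ from \cite{ritzberger_theory_1994}, extend this to a neighbourhood by continuity of eigenvalues, and then subtract $M\cdot Id$ to conclude asymptotic stability of nearby $x^M$ via remark~\ref{rem:asympt}. Your version is slightly more quantitative in tracking the margin $-\lambda$, but the structure is the same.
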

\begin{proof}
With the previous note, it is clear that $x^*$ is a mutation limit.
It remains to show that the mutation equilibria $(x^M)_{M>0}$ 
converging to $x^*$ for 
any $c \in \setint{\Delta}$ are asymptotically stable. 
Since all eigenvalues of the Jacobian at $x^*$ have strictly negative 
real parts, and in fact are real, \cite{ritzberger_theory_1994},
we have that the eigenvalues of $D\tilde{\phi}(x)$ have strictly 
negative real parts in a neighbourhood of $x^*$, as the roots of a 
polynomial vary continuously with its coefficients, e.g., 
\cite{harris_shorter_1987}, and $D\tilde{\phi}$ is continuous.
Therefore, in a neighbourhood of $x^*$, all eigenvalues of the Jacobian 
of $\tilde{\phi}^M$, with $D\tilde{\phi}^M(x) = D\tilde{\phi}(x) - M 
Id$, have strictly negative real parts for any $M \geq 0$, and thus the 
$x^M$ are asymptotically stable, e.g., \cite{perko_differential_2001}.
\end{proof}

\begin{remark}
Since an equilibrium is strict if and only if it is evolutionarily stable, this 
result also implies that all evolutionarily stable equilibria are attracting 
mutation limits.
\end{remark}

The following example shows that attracting mutation limits are not 
necessarily strict Nash equilibria, and hence that the concept of 
attracting mutation limits is also weaker than evolutionary stability:
\begin{exmpl}
Consider the 2-by-2 matching pennies game given by the payoffs:
\[
\begin{pmatrix}
(1,0) & (0,1) \\
(0,1) & (1,0)
\end{pmatrix}
\]

The strategy profile $((\frac{1}{2}, \frac{1}{2}), (\frac{1}{2}, \frac{1}{2}))$ 
is a Nash equilibrium but not strict.
However, it is an attracting mutation limit, as proven in a forthcoming 
article.
\end{exmpl}

\subsubsection{A necessary condition for attracting mutation limits}

The observation that not all Nash equilibria are attracting mutation limits 
relies on the following:
\begin{lemma}
Let $x^* \in \Delta$ be an attracting mutation limit. Then all eigenvalues of 
the Jacobian $D\tilde{\phi}(x^*)$ have nonpositive real parts.
\end{lemma}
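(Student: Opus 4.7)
The plan is to prove the contrapositive: if $D\tilde{\phi}(x^*)$ has some eigenvalue $\lambda^*$ with strictly positive real part $\alpha := \Re(\lambda^*) > 0$, then $x^*$ cannot be an attracting mutation limit. The key tool is the relation $D\tilde{\phi}^M(x) = D\tilde{\phi}(x) - M\cdot Id$ from Remark \ref{rem:asympt}, which implies that the spectrum of $D\tilde{\phi}^M$ is just a shift of the spectrum of $D\tilde{\phi}$ by $-M$. Hence an eigenvalue with large positive real part for $D\tilde{\phi}$ at a mutation equilibrium will survive as an eigenvalue with positive real part for $D\tilde{\phi}^M$, forcing instability of the mutation equilibrium.

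More concretely, I would proceed as follows. First, fix an arbitrary $c \in \setint{\Delta}$. Since $x^*$ is (in particular) a mutation limit, there exists a sequence $(x^M)_{M>0}$ of mutation equilibria with $x^M \to x^*$ as $M \to 0$. Second, since $f \in \contdiff{1}$, $D\tilde{\phi}$ is continuous and therefore the coefficients of the characteristic polynomial of $D\tilde{\phi}(x^M)$ converge to those of $D\tilde{\phi}(x^*)$. As roots of a polynomial vary continuously with its coefficients (as already used in the preceding lemma, via \cite{harris_shorter_1987}), there is a sequence of eigenvalues $\lambda^M \in \sigma(D\tilde{\phi}(x^M))$ with $\lambda^M \to \lambda^*$. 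In particular, for all sufficiently small $M > 0$ we have $\Re(\lambda^M) > \alpha/2$.

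Third, using $D\tilde{\phi}^M(x^M) = D\tilde{\phi}(x^M) - M \cdot Id$, the value $\lambda^M - M$ is an eigenvalue of $D\tilde{\phi}^M(x^M)$ with real part
\[
\Re(\lambda^M - M) = \Re(\lambda^M) - M > \tfrac{\alpha}{2} - M.
\]
Choosing $M$ also smaller than $\alpha/2$, we obtain an eigenvalue of $D\tilde{\phi}^M(x^M)$ with strictly positive real part. The linearization then possesses a nontrivial unstable direction, so $x^M$ cannot be asymptotically stable (e.g.\ by the stable manifold theorem, \cite{perko_differential_2001, teschl_ordinary_2012}). Since this holds for all sufficiently small $M$, the sequence $(x^M)$ witnesses the failure of the attracting property, contradicting the hypothesis. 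Therefore no eigenvalue of $D\tilde{\phi}(x^*)$ has positive real part.

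The argument is largely bookkeeping once the shift identity $D\tilde{\phi}^M = D\tilde{\phi} - M\cdot Id$ and continuity of eigenvalues are invoked. The only mildly delicate step is the simultaneous control of two smallness conditions on $M$: $M$ must be small enough that $x^M$ is close enough to $x^*$ for an eigenvalue of $D\tilde{\phi}(x^M)$ to sit above $\alpha/2$, and $M$ itself must be less than $\alpha/2$. Both conditions are satisfied on a neighbourhood of $0$, so this is easily handled; the rest is immediate from the linearisation principle.
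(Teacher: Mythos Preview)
Your proof is correct and follows essentially the same approach as the paper: both argue by contraposition, use continuity of eigenvalues together with the shift identity $D\tilde{\phi}^M = D\tilde{\phi} - M\cdot Id$ to produce an eigenvalue of $D\tilde{\phi}^M(x^M)$ with positive real part for all small $M$, and then invoke linearised instability. The only cosmetic difference is that the paper phrases the continuity step via a neighbourhood $U$ of $x^*$ on which some eigenvalue has real part exceeding a fixed $\epsilon$, whereas you track a converging sequence $\lambda^M \to \lambda^*$; these are equivalent formulations of the same idea.
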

\begin{proof}
Suppose there is an eigenvalue of $D\tilde{\phi}(x^*)$ with a strictly 
positive real part.
Then there is $\epsilon > 0$ and a neighbourhood $U$ of 
$x^*$ such that $D\tilde{\phi}(x)$ has an eigenvalue $\lambda$ with 
$\Re(\lambda) > \epsilon$ for all $x \in U$.
Let $(x^M)_{M > 0}$ be a sequence of mutation equilibria converging to 
$x^*$ for some $c 
\in \setint{\Delta}$.
Then there is $\epsilon'$ such that $x^M \in U$ for $M < \epsilon'$.
In particular, we can choose $\epsilon' < \epsilon$.
Then the Jacobian $D\tilde{\phi}^M(x^M)$, with $D\tilde{\phi}^M(x^M) = 
D\tilde{\phi}(x^M) - M Id$, has an 
eigenvalue with strictly positive real part, and $x^M$ is not asymptotically 
stable, as it is not even stable, e.g., \cite{hirsch_differential_1974}.
Therefore, $x^*$ is not an attracting mutation limit.
\end{proof}

This result, together with the following example, then demonstrates 
that not all Nash equilibria are attracting mutation limits:
\begin{exmpl}
Consider the 2-by-2 coordination game given by:
\[
\begin{pmatrix}
(1,1) & (0,0) \\
(0,0) & (1,1)
\end{pmatrix}
\]
The strategy profile $((\frac{1}{2}, \frac{1}{2}), (\frac{1}{2}, \frac{1}{2}))$ 
is a Nash equilibrium, but its Jacobian has eigenvalues with positive real 
parts and therefore, it is not an attracting mutation limit.
\end{exmpl}

\section{Discussion}

We have shown that a very simple form of mutation leads to qualitative 
changes in the multi-population replicator dynamics.
Furthermore, these changes do not depend on the specific choice of 
parameters but are of a general character.
Not only do mutation limits exist for all continuously differentiable 
fitness functions, mutation can also cause the dynamics to approximate 
equilibria that would not be approximated without mutation, again 
independently of the choice of specific mutation parameters, which is 
due to asymptotically stable equilibria arising close to an original 
equilibrium.
The closest results to our approach that we are aware of are presented 
in \cite{ritzberger_theory_1994}, and if considered as an approximation 
to RD, certain aspects of RMD are clarified by those results, as 
indicated.
The results presented here differ in that they show robustness in 
a system of families of approximations which are not related to 
perturbed normal-form game payoffs and in that they focus on the 
effects on the stability of equilibria, independent of the choice of 
the specific approximation.

With respect to periodic behaviour in biological populations it should 
be noted that the degree of stabilisation of RD depends on the mutation 
rate, resulting in a very slow approach of an asymptotically stable 
mutation equilibrium and seemingly periodic behaviour if mutation is 
low.
In an empirical situation this can lead to difficulties in 
distinguishing dynamics with truly periodic behaviour from ones with 
only seemingly periodic behaviour if measuring on a (relatively) small 
time scale.
Furthermore, in small populations stochastic effects will play a 
significant role.
Therefore, under very low mutation, empirical findings of periodic 
fluctuations can be consistent with our results
if measured in small populations on a small time scale,
such that any stabilising effects of mutation will be more apparent in 
large populations on large time scales, or with sufficiently fast 
reproduction.

On the one hand, given the potential health impacts of even slight 
mutations on organisms and the fact that such mutations occur with a 
non-negligible probability, as mentioned earlier, and given further its 
role as a generator of variety on which evolutionary selection 
operates, it is clear that it is worth including mutation mechanisms in 
the study of populations, and one should expect results that deviate 
potentially significantly from models without mutation.

On the other hand, given that the multi-population replicator dynamics has been 
shown to be related to learning dynamics and that mutation-like terms have been 
shown to arise in formulations of Q-learning algorithms, it is worth noting 
that our results show that replicator-mutator dynamics have more desirable 
convergence properties than the pure replicator dynamics, while remaining 
arbitrarily close to a Nash equilibrium.
Therefore, attracting mutation limits resulting from a 
replicator-mutator dynamics can be considered a more suitable class of 
dynamic solution approaches for games than the pure multi-population 
replicator dynamics.

As shown, attracting mutation limits do not exist for all games, 
and the characterization of their existence is therefore an open 
problem.
We will address this problem partially in forthcoming results on 
attracting mutation limits in the matching pennies game, which can be 
considered a model of antagonistic coevolution.
Furthermore, we have considered a specific form of mutation, and therefore the 
question of which properties carry over to more complicated and more realistic 
mutation mechanisms remains.

\bibliographystyle{abbrv}
\bibliography{db} 

\appendix

\section{Proof of proposition \ref{prop:uniquenessForHighMutation}}

The proof of proposition \ref{prop:uniquenessForHighMutation} relies on 
the implicit function theorem, which we restate for convenience, e.g., 
as in \cite[Thm 3.3.1]{krantz_implicit_2013}:
\begin{thm}[Implicit Function]\label{thm:implFunc}
Let $W \subset \RR$, $X \subset \RR^m$ be open and let
$\rho: W \times X \rightarrow \RR^m, (w,x) \mapsto \rho(w,x)$ be a 
continuously differentiable function.
Let further $(w',x') \in W \times X$ be such that $\rho(w',x') = 0$ and 
the $m \times m$ matrix $ \frac{\partial}{ \partial x} \rho (w',x') $ 
be invertible.

Then there exist an open neighbourhood $W_1 \subset W$ of $w'$, an 
open neighbourhood $X_1 \subset X$ of $x'$, and a continuously 
differentiable function $F: W_1 \rightarrow X_1$ such that $\forall w 
\in W_1: \rho(w,F(w)) = 0$.
Furthermore, for all $(w,x) \in W_1 \times X_1$ we have that $\rho(w,x) 
= 0$ if and only if $x = F(w)$.
\end{thm}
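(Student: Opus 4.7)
The plan is to prove the implicit function theorem via the Banach fixed-point theorem applied to a Newton-like iteration. Write $A := \tfrac{\partial}{\partial x} \rho(w', x')$, which is invertible by hypothesis, and for each $w$ near $w'$ define $T_w : X \to \RR^m$ by
\[
T_w(x) := x - A^{-1} \rho(w, x).
\]
The key observation is that $x$ is a fixed point of $T_w$ if and only if $\rho(w, x) = 0$, so the existence and uniqueness statements about $F$ reduce to a parametrised fixed-point problem for the family $\{T_w\}$.

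The core quantitative step is to exhibit neighbourhoods $W_1 \ni w'$ and a closed ball $\bar{B}_r(x') \subset X$ such that, for every $w \in W_1$, the map $T_w$ sends $\bar{B}_r(x')$ into itself and is a contraction there. For the contraction estimate, I would compute
\[
\tfrac{\partial}{\partial x} T_w(x) = I - A^{-1} \tfrac{\partial}{\partial x} \rho(w, x),
\]
which vanishes at $(w', x')$; continuity of $\tfrac{\partial}{\partial x}\rho$ then gives $\| \tfrac{\partial}{\partial x} T_w(x)\| \le \tfrac{1}{2}$ on a suitable neighbourhood, whence $T_w$ is $\tfrac{1}{2}$-Lipschitz on $\bar{B}_r(x')$ by the mean value inequality. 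For the self-map property, I would use
\[
\|T_w(x) - x'\| \le \|T_w(x) - T_w(x')\| + \|T_w(x') - x'\|
  \le \tfrac{1}{2} r + \|A^{-1} \rho(w, x')\|,
\]
and since $\rho(w', x') = 0$ and $\rho$ is continuous, the second term can be made smaller than $r/2$ by shrinking $W_1$. The Banach fixed-point theorem then produces a unique $F(w) \in \bar{B}_r(x')$ with $\rho(w, F(w)) = 0$, establishing both the existence and the local uniqueness clause (any $x \in X_1 := B_r(x')$ with $\rho(w, x) = 0$ is a fixed point of $T_w$ and must agree with $F(w)$).

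It then remains to upgrade $F$ from just-defined to continuously differentiable. Continuity follows immediately from the contraction estimate: writing $F(w) = T_w(F(w))$ and $F(w') = T_{w'}(F(w'))$ and using the $\tfrac{1}{2}$-Lipschitz bound in $x$ together with continuity of $T_\cdot(x)$ in $w$, one gets $\|F(w) - F(w')\| \le 2 \|T_w(F(w')) - T_{w'}(F(w'))\| \to 0$. For differentiability, I would differentiate the identity $\rho(w, F(w)) = 0$ formally, yielding the candidate
\[
DF(w) = -\bigl(\tfrac{\partial}{\partial x} \rho(w, F(w))\bigr)^{-1} \tfrac{\partial}{\partial w} \rho(w, F(w)),
\]
which is well defined on $W_1$ after possibly shrinking it, since invertibility of $\tfrac{\partial}{\partial x} \rho$ is an open condition. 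Verifying that this is genuinely the Fr\'echet derivative is a standard $o(\cdot)$ estimate obtained by Taylor-expanding $\rho$ in both arguments around $(w, F(w))$ and solving for $F(w + h) - F(w)$; continuity of $DF$ then follows from continuity of the entries on the right-hand side.

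The main obstacle I expect is the differentiability step rather than the fixed-point construction: the contraction argument is essentially a one-line consequence of the invertibility of $A$ and the $\mathcal{C}^1$ hypothesis, but checking that the formal expression for $DF(w)$ is truly the derivative, with the correct linear remainder, requires care in how one controls the cross-term coming from $x$ varying with $w$. Once that estimate is in hand, bootstrapping continuity of $DF$ from continuity of $\tfrac{\partial}{\partial x} \rho$, $\tfrac{\partial}{\partial w} \rho$, and matrix inversion completes the proof.
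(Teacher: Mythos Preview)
Your proposal outlines a correct and standard proof of the implicit function theorem via the contraction mapping principle applied to the Newton iterate $T_w(x) = x - A^{-1}\rho(w,x)$; the choice of neighbourhoods, the self-map and contraction estimates, and the derivation and verification of the formula for $DF$ are all sound as sketched.

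However, there is nothing to compare against: the paper does not prove this theorem. It is stated in the appendix purely for reference, with the explicit remark that it is restated ``for convenience, e.g., as in [Thm 3.3.1, Krantz--Parks]'', and is then invoked as a black box in the proofs of the subsequent lemmas and of Proposition~\ref{prop:uniquenessForHighMutation}. So your contraction-mapping argument is not an alternative to the paper's proof but simply a proof of a classical result the paper takes as given.
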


For the proof of proposition \ref{prop:uniquenessForHighMutation} we 
will need a consequence of the implicit function theorem, based on the 
following statement that we can extend an implicitly defined function 
if the conditions of the implicit function theorem hold on the boundary 
of its domain:

\begin{lemma}
Let $\rho: W \subset \RR \rightarrow X \subset \RR^m$ be continuously 
differentiable for $W$ open and $X$ open,
and let $R: W_R \rightarrow X_R$ be continuously differentiable, with 
open and convex $W_R \subset W$ and open $X_R \subset X$, such that:
\begin{enumerate}[i)]
\item  $\forall v \in W_R: \rho(v,R(v)) = 0$;
\item $\forall (v,x) \in W_R \times X_R :\: \rho(v,x) = 0 
\Leftrightarrow x = R(v)$.
\end{enumerate}
If for some sequence $(v_n)_{n \in \NN} \subset W_R$ with
$v_n \rightarrow v' \in \partial W_R \cap W$ and an accumulation point 
$x' \in X$ of $(R(v_n))_{n \in \NN}$, 
the matrix $ \frac{\partial}{ \partial x} \rho (v',x') $ is invertible,
then there is a unique continuously differentiable extension of $R$ 
with the above properties whose domain is open and a proper superset of 
$W_R$.
In particular, $(R(v_n))_{n \in \NN}$ is convergent with limit $v'$.
\end{lemma}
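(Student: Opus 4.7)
The plan is to invoke Theorem \ref{thm:implFunc} at $(v', x')$ and glue the resulting local solution to $R$ on the overlap. First, I would extract a subsequence $(v_{n_k})_k$ with $R(v_{n_k}) \to x'$; combining $\rho(v_{n_k}, R(v_{n_k})) = 0$ with continuity of $\rho$ forces $\rho(v', x') = 0$. Invertibility of $\frac{\partial}{\partial x}\rho(v', x')$ then licenses Theorem \ref{thm:implFunc} at $(v', x')$, producing open neighborhoods $W_1 \subset W$ of $v'$ and $X_1 \subset X$ of $x'$, together with a continuously differentiable $F : W_1 \to X_1$ satisfying $F(v') = x'$, $\rho(v, F(v)) = 0$ on $W_1$, and the local equivalence $\rho(v,x) = 0 \Leftrightarrow x = F(v)$ on $W_1 \times X_1$.

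Next I would exploit the crucial fact $W \subset \RR$: since $W_R$ is open and convex in $\RR$, it is an open interval with $v'$ as a finite endpoint, so without loss of generality $v'$ is the right endpoint and $W_R \cap W_1 \supset I := (v' - \delta, v')$ for some $\delta > 0$. I would then show $R \equiv F$ on $I$ via the standard open--closed--nonempty argument on the connected set $I$. Setting $A := \{v \in I : R(v) = F(v)\}$: closedness in $I$ is immediate from continuity of $R$ and $F$; openness comes from noting that $v \in A$ implies $R(v) = F(v) \in X_1$, so by continuity $R$ maps a neighborhood of $v$ into $X_1$, forcing $R = F$ there by local uniqueness on $W_1 \times X_1$; and nonemptiness follows since $v_{n_k} \in I$ with $R(v_{n_k}) \in X_1$ for large $k$ (as $X_1$ is a neighborhood of $x'$), so local uniqueness again gives $R(v_{n_k}) = F(v_{n_k})$. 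Connectedness of $I$ then forces $A = I$.

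With $R$ and $F$ agreeing on $I$, the piecewise $\tilde{R} : W_R \cup W_1 \to X$ defined by $R$ on $W_R$ and $F$ on $W_1$ is well-defined, continuously differentiable, and has open domain strictly containing $W_R$ (since $v' \in W_1 \setminus W_R$). The ``in particular'' statement---which I read as asserting $R(v_n) \to x'$, since as written the claimed limit $v'$ lies in $W$ rather than in the codomain of $R$---then follows from continuity of $\tilde{R}$ at $v'$, upgrading the initial subsequential convergence to full-sequence convergence. Uniqueness of the extension is again a connectedness matter: any rival extension $\tilde{R}'$ satisfying (i) and (ii) must have $\tilde{R}'(v') = x'$ by the same subsequence-and-limit argument, whence $\tilde{R}' = F$ near $v'$ by local uniqueness from Theorem \ref{thm:implFunc}, and $\tilde{R}' = R$ on $W_R$ by hypothesis. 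The main technical step is verifying property (ii) for the extension, which requires an open $\tilde{X}_R \subset X$ containing $\tilde{R}(W_R \cup W_1)$ on whose product the uniqueness of zeros of $\rho$ holds; this can be arranged by shrinking $W_1$ so that $F(W_1) \subset X_R$ and taking $\tilde{X}_R = X_R$ when $x' \in X_R$, or more generally by choosing $\tilde{X}_R$ to be a suitably narrow open neighborhood of the graph of $\tilde{R}$, patching the two local uniqueness properties.
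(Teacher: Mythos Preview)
Your proposal is correct and follows essentially the same approach as the paper: apply the implicit function theorem at $(v',x')$, use connectedness of the one-dimensional overlap $W_R \cap W_1$ to show that $R$ agrees with the local implicit function there, glue, and deduce convergence of $(R(v_n))$. The paper phrases the connectedness step as an explicit path argument (parametrising the segment between two consecutive $v_n$'s to rule out $R(v_{N'}) \notin X'$), while you use the equivalent and cleaner open--closed--nonempty formulation; you also correctly flag the typo in the ``in particular'' clause (the limit should be $x'$, not $v'$) and are more explicit than the paper about what is needed to verify property~(ii) for the glued extension, which the paper simply asserts.
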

\begin{proof}
Let $(v_n)_{n \in \NN} \subset W_R$ with $v_n \rightarrow v' 
\in \partial W_R \cap W$ and let $x' \in X$ be an 
accumulation point of $(R(v_n))_{n \in \NN}$,
such that the matrix $ \frac{\partial}{ \partial x} \rho (v',x') $ 
is invertible.
Due to the continuity of $\rho$ on $W \times X$, we have that $\rho(v', 
x') = 0$.
With the implicit function theorem, there are open neighbourhoods $W' 
\subset W$ of $v'$, where we can require $W'$ to be convex, and $X' 
\subset X$ of $x'$ and a unique continuously differentiable function 
$S: W' \rightarrow X'$ with the corresponding properties i) and ii).

We will show that there is $N$ such that $(R(v_n))_{n \geq N} \subset 
X'$:
As $x'$ is an accumulation point of $(R(v_n))_{n \in \NN}$,
there are infinitely many $n \in \NN$ with $R(v_n) \in X'$,
in particular let $R(v_N) \in X'$.
Note that we can assume $(v_n)_{n \geq N} \subset W'$ as $v' \in W'$ is 
the limit of that sequence.
Assume that there is some $N' > N$ with $R(v_{N'}) \notin X'$ and 
let $N'$ be minimal.
W.l.o.g. let $N' = N + 1$ and define $v: [0,1] \rightarrow W', t 
\mapsto (1-t) v_{N} + t v_{N'}$.
Then $v([0,1]) \subset W'$ due to convexity.
Consider that $R(v_{N}) \in X'$, with $X'$ open.
Therefore, there is some $\epsilon > 0$ with $R(v([0,\epsilon])) 
\subset X'$.
However, with our assumption, $R(v(1)) = R(v_{N'}) \notin X'$. 
Then, with the complement of $X'$ being closed, there is a minimal 
$\bar{t}$ such that $R(v(\bar{t})) \notin X'$.
Then $R \circ v = S \circ v$ on $[0, \bar{t})$, but due to their 
continuity we then also have $R(v(\bar{t})) = S(v(\bar{t}))$ and thus 
$R(v(\bar{t})) \in X'$, in contradiction to $R(v(\bar{t})) \notin X'$.
Thus, $R(v_{N'}) = R(v(1)) \in X'$, in contradiction to 
$R(v_{N'}) \notin X'$.
Overall, we then have $(R(v_n))_{n \geq N} \subset X'$,
and further $R([v_N, v')) \subset X'$ (assuming $v_N < v'$).
This implies that $R = S$ on $W_R \cap W'$ and $T := R \cup S$ is a 
proper, continuously differentiable extension of $R$, with 
corresponding properties i) and ii).
In particular, due to $(R(v_n))_{n \geq N} = (T(v_n))_{n \geq N}$, 
$(R(v_n))_{n \in \NN}$ is convergent with limit $v'$.
\end{proof}

The following lemma states that there is an implicitly defined function 
whose domain is such that the points at the boundary do not satisfy the 
conditions of the implicit function theorem:

\begin{lemma}
Let $\rho: W \rightarrow X$ be as given in the implicit function 
theorem and $(w,x^w) \in W \times X$ such that $\rho(w,x^w) = 0$ and 
the matrix $ \frac{\partial}{ \partial x} \rho (w,x^w) $ is invertible.
Then there exist open neighbourhoods $W^* \subset W$ of $w$, with 
$W^*$ convex, and $X^* \subset X$ of $x^w$, and a continuously 
differentiable function $R^*: W^* \rightarrow X^*$ such that:
\begin{enumerate}[i)]
\item  $\forall v \in W^*: \rho(v,R^*(v)) = 0$;
\item $\forall (v,x) \in W^* \times X^* :\: \rho(v,x) = 0 
\Leftrightarrow x = R^*(v)$;
\item for all $(v_n)_{n \in \NN} \subset W^*$ with $v_n \rightarrow v' 
\in \partial W^* \cap W$ and every accumulation point $x' \in X$ of 
$(R^*(v_n))_{n \in \NN}$, the matrix $ \frac{\partial}{ \partial x} 
\rho (v',x') $ is singular.
\end{enumerate}
In particular, $R^*$ is a maximally defined such function.
\end{lemma}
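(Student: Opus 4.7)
The plan is to construct $R^*$ as a Zorn-maximal continuously differentiable implicit section extending the local solution provided by Theorem \ref{thm:implFunc}, and then to derive property iii) from the preceding extension lemma by maximality.

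I would begin by applying Theorem \ref{thm:implFunc} at $(w,x^w)$ to obtain an open convex neighbourhood $W_0 \subset W$ of $w$ (since $W \subset \RR$, I may take $W_0$ to be an open interval), an open neighbourhood $X_0 \subset X$ of $x^w$, and a continuously differentiable $R_0:W_0\to X_0$ satisfying i) and ii) on $W_0\times X_0$. I then introduce the family $\mathcal{F}$ of all triples $(W',X',R')$ with $W'\subset W$ an open interval containing $w$, $X'\subset X$ open, $R':W'\to X'$ continuously differentiable, properties i) and ii) valid on $W'\times X'$, and $R'|_{W_0}=R_0$. Partially order $\mathcal{F}$ by extension: $(W',X',R') \preceq (W'',X'',R'')$ iff $W'\subset W''$, $X'\subset X''$ and $R''|_{W'}=R'$.

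For a chain $(W'_\alpha,X'_\alpha,R'_\alpha)_\alpha$ in $\mathcal{F}$, I take pointwise unions $W'_\infty=\bigcup_\alpha W'_\alpha$, $X'_\infty=\bigcup_\alpha X'_\alpha$, and define $R'_\infty(v):=R'_\alpha(v)$ for any $\alpha$ with $v\in W'_\alpha$; this is unambiguous because the chain ordering together with ii) forces any two candidate values to agree on overlaps. Continuous differentiability and property i) transfer to $R'_\infty$ immediately. For property ii), given $(v,x)\in W'_\infty \times X'_\infty$ with $\rho(v,x)=0$, I pick chain elements with $v\in W'_\alpha$ and $x\in X'_\beta$; linear ordering means one dominates the other, so a single chain element covers both $v$ in its domain and $x$ in its image, and its property ii) gives $x=R'_\infty(v)$. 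This furnishes an upper bound, and Zorn's lemma yields a maximal element $(W^*,X^*,R^*)$, which is convex (as a nested union of open intervals) and satisfies i) and ii) by construction.

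Property iii) then follows by contradiction: if some sequence $(v_n)\subset W^*$ converged to $v'\in \partial W^* \cap W$ with an accumulation point $x'\in X$ of $(R^*(v_n))$ such that $\tfrac{\partial}{\partial x}\rho(v',x')$ is invertible, the preceding lemma would produce a continuously differentiable extension of $R^*$ whose domain is a proper open superset of $W^*$ and which still satisfies i) and ii); restricting that domain to an open interval containing $W^*\cup\{v'\}$ gives an element of $\mathcal{F}$ strictly dominating $(W^*,X^*,R^*)$, contradicting maximality. The main subtlety I anticipate is the preservation of the global uniqueness ii) under unions, since a priori the $v$ and $x$ coordinates of a candidate zero need not come from the same chain element; the linear ordering of chains is precisely what resolves this, by reducing to a single element containing both. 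Everything else — openness of $X^*$, convexity of $W^*$ in $\RR$, and the direct invocation of the extension lemma — is routine.
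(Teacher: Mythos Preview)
Your proposal is correct and follows essentially the same route as the paper: both arguments apply Zorn's lemma to the family of $C^1$ implicit sections satisfying i) and ii), verify that chains have upper bounds by taking unions (with the total ordering used to reduce the check of ii) for a pair $(v,x)$ to a single chain element), and then obtain iii) by contradiction via the preceding extension lemma. Your explicit inclusion of the codomain $X'$ in the ordering makes the verification of ii) for the union slightly cleaner than the paper's graph-inclusion ordering, but the substance is identical.
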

\begin{proof}

Let $\mathcal{R}$ be the set of all continuously differentiable 
functions $R_\alpha: W_\alpha \rightarrow X_\alpha $,
with $W_\alpha \subset W$ convex and $X_\alpha \subset X$ being open 
neighbourhoods of $w$ and $x^w$, respectively, such that
$R_\alpha$ satisfies i) and ii).
Due to $\rho$ being continuously differentiable,
$ \frac{\partial}{ \partial x} \rho $ is regular in a convex, 
open neighbourhood of $(w, x^w)$.
With the implicit function theorem, $\mathcal{R}$ is not empty.
We define a partial order on $\mathcal{R}$ by the set inclusion on the 
graphs of the functions $R_\alpha \in \mathcal{R}$.

Let $\mathcal{O}$ be a non-empty completely ordered chain in 
$\mathcal{R}$.
Consider the function $R'$ defined by the graph:
\[
\Gamma(R') =
  \bigcup_{R_\alpha \in \mathcal{O}} \{ (v, R_\alpha(v)) \,|\, v \in 
  W_\alpha \}
\]
Then
$W' = \bigcup_{R_\alpha \in \mathcal{O}} W_\alpha \subset W$ and
$X' = \bigcup_{R_\alpha \in \mathcal{O}} X_\alpha \subset X$ are 
open neighbourhoods of $w$ and $x^w$ and $R': W' \rightarrow X'$ is a 
continuously differentiable function.
Furthermore, $\{ W_\alpha \,|\, R_\alpha \in \mathcal{O} \}$ is 
completely ordered by set inclusion as well and therefore, $W'$ is 
convex.
It is clear that $R'$ satisfies i) as all $R_\alpha$ satisfy i).
Let $(v,x) \in W' \times X'$.
Then there is $R_\alpha \in \mathcal{O}$ with $v \in W_\alpha$, $x \in 
X_\alpha$, and $R'(v) = R_\alpha(v)$.
Then, as $R_\alpha$ satisfies ii), we have $\rho(v,x) = 0 
\Leftrightarrow x = R_\alpha(v) = R'(v)$, and thus $R'$ 
satisfies ii).
Therefore, $R' \in \mathcal{R}$, and with Zorn's Lemma, $\mathcal{R}$ 
contains a maximal element $R^*: W^* \rightarrow X^*$, such that $R^*$ 
satisfies i) and ii).

For iii), let $(v_n)_{n \in \NN} \subset W^*$ with $v_n \rightarrow v' 
\in \partial W^* \cap W$ and let $x' \in X$ be an accumulation point of 
$(R^*(v_n))_{n \in \NN}$.
In particular, this implies $\overline{W^*} \subset W$.
Assume that the matrix $ \frac{\partial}{ \partial x} \rho (v',x') $ 
is invertible.
With the previous lemma %TODO reference
there is a proper extension of $R^*$ and $R^*$ is not maximal, a 
contradiction.
Therefore, the matrix $ \frac{\partial}{ \partial x} \rho (v',x') $ 
is singular.
\end{proof}

In order to apply the above lemma, for $M > 0$, we rewrite 
\eqref{eqn:RMD} as
\begin{gather}\label{eq:rho}
\rho: \RR \times X \rightarrow \RR^S, (w,x) \mapsto w \phi(x) + (c 
- x)
\end{gather}
with $w = M^{-1}$.
It is clear that $\rho(M^{-1},x) = M^{-1} \phi^{M}(x)$ and 
therefore $\rho(M^{-1},x) = 0 \Leftrightarrow \phi^{M}(x) = 0$ and that 
$\rho$ is continuously differentiable on $\RR \times X$ with some
$X \supset \Delta$ open and bounded, depending on $\phi$.
Then we obtain lemma \ref{lemma:highMutationContFunc} as a corollary:
\begin{coroll}\label{coroll:highMutationContFunc}
Let $c \in \setint{\Delta}$ and $\underline{M}$ be as in lemma 
\ref{lem:highMutation}.
Let $x^M$ be a mutation equilibrium for some $M > \underline{M}$.
Then there is a unique function
$\mathcal{M}: (\underline{M}, \infty) \rightarrow \Delta$
such that $\mathcal{M}(M) = x^M$ and for all $m \in (\underline{M}, 
\infty)$, $\mathcal{M}(m)$ is a mutation equilibrium for $m$.
In particular, $\mathcal{M}$ is continuously differentiable and 
$\mathcal{M}(m) \overset{m \rightarrow \infty}{\longrightarrow} c$.
\end{coroll}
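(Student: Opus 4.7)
The plan is to recognize the claim as a global consequence of the implicit function theorem, applied via the maximal-extension lemma proved just before in the appendix. The reformulation \eqref{eq:rho}, $\rho(w,x) = w\phi(x) + (c-x)$ with $w = M^{-1}$, is engineered precisely so that mutation equilibria for $M$ are exactly the zeros of $\rho(M^{-1}, \cdot)$ in $\Delta$, while the degenerate case $w=0$ gives $\rho(0,x) = c - x$ whose unique zero is $x=c$, pinning down the target limit from the outset.

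First, I would pass to reduced coordinates (eliminating $x_{in_i}$ in each population as was done before Remark~\ref{rem:asympt}), obtaining $\tilde{\rho}$ on $\RR \times \tilde{X}$ with $\tilde{X} \supset \tilde{\Delta}$ open. The key computation is
\[
\tfrac{\partial}{\partial \tilde{x}}\tilde{\rho}(w,\tilde{x}) = w\, D\tilde{\phi}(\tilde{x}) - \mathrm{Id},
\]
which for $w > 0$ equals $w\bigl(D\tilde{\phi}(\tilde{x}) - w^{-1}\mathrm{Id}\bigr) = w\, D\tilde{\phi}^{1/w}(\tilde{x})$, and for $w=0$ equals $-\mathrm{Id}$. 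By Lemma~\ref{lem:highMutation}, this Jacobian is therefore invertible for every $(w,\tilde{x}) \in [0, \underline{M}^{-1}) \times \tilde{\Delta}$. Applying the maximal-extension lemma to the initial datum $(w_0, \tilde{x}^M) := (M^{-1}, \tilde{x}^M)$, which is a zero of $\tilde{\rho}$ with invertible partial Jacobian, I obtain a continuously differentiable function $R^*: W^* \to X^*$ satisfying conditions i)--iii) of that lemma, with $R^*(M^{-1}) = \tilde{x}^M$.

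To show $W^* \supseteq [0, \underline{M}^{-1})$, I argue by contradiction: suppose some $w' \in \partial W^* \cap [0, \underline{M}^{-1})$ exists. By compactness of $\overline{\tilde{\Delta}}$, any sequence $v_n \to w'$ in $W^*$ yields an accumulation point $\tilde{x}'$ of $(R^*(v_n))$ in $\overline{\tilde{\Delta}} \subset \tilde{X}$. Continuity of $\tilde{\rho}$ gives $\tilde{\rho}(w', \tilde{x}') = 0$, so for $w' > 0$ the point $\tilde{x}'$ is a mutation equilibrium for $M' = 1/w' > \underline{M}$, hence lies in $\setint{\tilde{\Delta}}$; for $w' = 0$, necessarily $\tilde{x}' = c \in \setint{\tilde{\Delta}}$. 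In either case the Jacobian at $(w', \tilde{x}')$ is invertible by the computation above, contradicting condition iii). Defining $\mathcal{M}(M) := R^*(M^{-1})$ gives the required continuously differentiable map on $(\underline{M}, \infty)$; continuity of $R^*$ at $0$ yields $\mathcal{M}(M) \to R^*(0) = c$ as $M \to \infty$, and local uniqueness along $R^*$ (condition ii)) propagates to global uniqueness of $\mathcal{M}$ along its connected domain. The main obstacle is this last step: one must ensure the implicit branch $R^*$ stays inside $\tilde{\Delta}$ so that Lemma~\ref{lem:highMutation} can legitimately be invoked at prospective boundary points, which is handled by the fact that mutation equilibria always lie in $\setint{\tilde{\Delta}}$ together with the compactness argument above.
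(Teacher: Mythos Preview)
Your proof is correct and follows essentially the same route as the paper: reformulate via $\rho(w,x) = w\phi(x) + (c-x)$, apply the maximal-extension lemma to the datum $(M^{-1}, x^M)$, and use invertibility of $\partial_x\rho$ on $[0,\underline{M}^{-1})\times\Delta$ from Lemma~\ref{lem:highMutation} to force $W^* \supseteq [0,\underline{M}^{-1})$, then set $\mathcal{M}(m) = R^*(m^{-1})$. Your version is in fact more careful than the paper's own proof---you pass explicitly to reduced coordinates (resolving the paper's tacit identification of $D\phi^M$ with $D\tilde{\phi}^M$), spell out the contradiction argument at $\partial W^*$, derive the limit $c$ from continuity of $R^*$ at $w=0$, and flag the one genuine subtlety (that the branch stays in $\tilde{\Delta}$), which the paper simply asserts.
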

\begin{proof}
Consider that for $m > \underline{M}$, $D{\phi^M}$ is invertible 
everywhere on $\Delta$ due to lemma \ref{lem:highMutation}, and that 
for 
$w = m^{-1}$ with $\rho$ from \eqref{eq:rho}, the matrix
$ \frac{\partial}{ \partial x} \rho (w,x) $ is invertible whenever 
$D{\phi^m}(x)$ is.
Then let $\underline{w} = \underline{M}^{-1}$ and $w = M^{-1}$ for some 
$M > \underline{M}$.
Then applying the previous lemma to $w$, $x^M$ and $\rho$ yields a 
continuously differentiable function
$R: W \rightarrow \Delta$ with $W \subset \RR$ and $w \in W$.
Furthermore, the previous lemma guarantees that
$[0, \underline{w}) \subset W$ because
$ \frac{\partial}{ \partial x} \rho (v,R(v)) $ is invertible 
$\forall v \in [0, \underline{w})$.
Thus, $\mathcal{M}: (\underline{M}, \infty) \rightarrow \Delta$ with $m 
\mapsto R(m^{-1})$ is continuously differentiable and has the 
desired properties.
\end{proof}

With this we can prove proposition \ref{prop:uniquenessForHighMutation}:
\begin{prop}[\ref{prop:uniquenessForHighMutation}]\label{appProp:uniquenessForHighMutation}
Let $c \in \setint{\Delta}$ and $\underline{M}$ as in lemma 
\ref{lem:highMutation}.
For all $M > \underline{M}$, the replicator-mutator dynamics 
\eqref{eqn:RMD} has a unique mutation equilibrium.
The unique map $\mathcal{M}: M \mapsto x^M$ is continuously 
differentiable on $(\underline{M}, \infty)$.

\end{prop}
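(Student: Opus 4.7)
The plan is to combine a direct uniqueness argument at sufficiently large mutation rates with the unique continuously differentiable continuation provided by Corollary~\ref{coroll:highMutationContFunc} to propagate that uniqueness down to all of $(\underline{M}, \infty)$.

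First, I would establish uniqueness for very large $M$ directly. The fitness $f$, and hence $\phi$, is continuously differentiable on the compact set $\Delta$ and so is Lipschitz with some constant $L$. If $x, y \in \setint{\Delta}$ are both mutation equilibria for the same $M$, then subtracting their defining equations $\phi(z) = M(z - c)$ (for $z = x, y$) gives $\phi(x) - \phi(y) = M(x - y)$, whence $\|x - y\| \leq M^{-1} L \|x - y\|$, forcing $x = y$ whenever $M > L$. So for such $M$ there is exactly one mutation equilibrium, which by Corollary~\ref{coroll:highMutationContFunc} lies on the curve approaching $c$.

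Next, suppose for contradiction that for some $M_0 \in (\underline{M}, \infty)$ there are two distinct mutation equilibria $x_0 \neq y_0$. Corollary~\ref{coroll:highMutationContFunc} yields unique continuously differentiable maps $\mathcal{M}_x, \mathcal{M}_y : (\underline{M}, \infty) \to \Delta$ with $\mathcal{M}_x(M_0) = x_0$ and $\mathcal{M}_y(M_0) = y_0$, each tracing a curve of mutation equilibria and each tending to $c$ as $M \to \infty$. For $M > L$ both must coincide with the unique equilibrium identified in the first step, so the agreement set $A = \{M \in (\underline{M}, \infty) : \mathcal{M}_x(M) = \mathcal{M}_y(M)\}$ is non-empty. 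I would then show $A$ is both open and closed in the connected interval $(\underline{M}, \infty)$. Closedness is immediate from continuity of $\mathcal{M}_x, \mathcal{M}_y$. Openness follows from the local uniqueness clause of the implicit function theorem applied to $\rho$ from \eqref{eq:rho}: at any $m \in A$, both maps are continuously differentiable solutions of $\rho(m^{-1}, \cdot) = 0$ through a common point where the relevant Jacobian is invertible by Lemma~\ref{lem:highMutation}, and so must agree on an open neighborhood of $m$. Connectedness of $(\underline{M}, \infty)$ then forces $A$ to be the whole interval, contradicting $x_0 \neq y_0$. Continuous differentiability of the resulting global map $\mathcal{M}: M \mapsto x^M$ then inherits immediately from that of any local branch produced by Corollary~\ref{coroll:highMutationContFunc}.

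The main obstacle is bridging the gap between the essentially local character of the implicit function theorem and the desired global uniqueness: a priori, the two paths $\mathcal{M}_x, \mathcal{M}_y$ live on the entire interval $(\underline{M}, \infty)$ and could remain distinct throughout, and one must combine connectedness with the large-$M$ contraction-type uniqueness to force coincidence. The open-closed dichotomy is the natural mechanism, but it depends crucially on the Jacobian regularity from Lemma~\ref{lem:highMutation} holding everywhere on $(\underline{M}, \infty)$, which is exactly what makes $\underline{M}$ (rather than the Lipschitz constant $L$) the correct lower bound and yields the improvement advertised in the remark following the proposition.
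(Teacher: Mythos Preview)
Your proof is correct and follows essentially the same route as the paper: establish uniqueness for $M$ exceeding the Lipschitz constant of $\phi$ (you via a direct norm estimate on $\phi(x)-\phi(y)=M(x-y)$, the paper via an explicit contraction map $F_{M',c}$), then invoke Corollary~\ref{coroll:highMutationContFunc} to propagate uniqueness down to all of $(\underline{M}, \infty)$. Your open-closed argument on the agreement set $A$ is correct but redundant, since the corollary already asserts that the function $\mathcal{M}$ through a given equilibrium is unique on the whole interval $(\underline{M},\infty)$; once $\mathcal{M}_x(M')=\mathcal{M}_y(M')$ at some $M'>L$, applying the corollary to $(M',x^{M'})$ immediately forces $\mathcal{M}_x=\mathcal{M}_y$ globally.
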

\begin{proof}
As $\phi$ is Lipschitz, let $L_\phi$ be the best Lipschitz constant for $\phi$.
Since $\phi$ is differentiable and $\Delta$ is convex, we further have 
that
\[
L_\phi = \| D\phi \|_{\infty, \Delta} := \sup_{x \in \Delta} \| 
D\phi(x) \| 
\geq \underline{M} 
\]
with $\underline{M}$ from lemma \ref{lem:highMutation}. Choose $M' > 
L_\phi$ and consider for $c \in \setint{\Delta}$ and some $s > 0$ the 
function $F_{M',c}: \Delta \rightarrow 
\Delta$ with:
\[
[ F_{M',c}(x) ]_{ih} = x_{ih} + s \left( \phi_{ih}(x) + M' (c_{ih} - 
x_{ih}) 
\right)
\]
Then, we have that
\begin{align*}
[ F_{M',c}(x) ]_{ih} - [ F_{M',c}(y) ]_{ih}
= &
x_{ih} + s \left( \phi_{ih}(x) + M' (c_{ih} - x_{ih}) \right)
\\
&
- y_{ih} - s \left( \phi_{ih}(y) + M' (c_{ih} - y_{ih}) \right)
\\
= &
(1 - s M') (x_{ih} - y_{ih}) + s \left(
	 \phi_{ih}(x) - \phi_{ih}(y) 
\right)
\end{align*}
and thus
\begin{align*}
\| F_{M',c}(x) - F_{M',c}(y) \|
\leq &
| 1 - s M' | \| x - y \| + s \| \phi(x) - \phi(y) \|
\\
\leq &
| 1 - s M' | \| x - y \| + s L_\phi \| x - y \|
\\
= &
( | 1 - s M' | + s L_\phi ) \| x - y \| \,.
\end{align*}
Choosing $s$ such that $ s M' \leq 1$, we have that:
\[
| 1 - s M' | + s L_\phi = 1 - s M' + s L_\phi = 1 + s (L_\phi - M') < 1
\]
Hence, $F_{M',c}$ is a contractive mapping and has a unique fixed 
point $x^{M'} \in \setint{\Delta}$.
Furthermore, every function $\mathcal{M}$ from corollary 
\ref{coroll:highMutationContFunc} satisfies then $\mathcal{M}(M') = 
x^{M'}$ and thus all such functions are identical yielding the 
uniqueness of mutation equilibria for all $M > \underline{M}$.

\end{proof}

\section{Proof of proposition 
\ref{prop:mutationlimitsingames}}\label{sec:proof:mutationlimitsingames}

In order to prove proposition \ref{prop:mutationlimitsingames}, we need 
to extend our \eqref{eqn:RMD} slightly, such that we can allow more 
general mutation to occur.
Recall that $g_{ih}(x) = f_{ih}(x) - \bar{f}_{i}(x)$ and that then 
$\nashEq/ = \left\{ x \in \Delta 
\,|\, g(x) \leq 0 \right\}$ is the 
set of Nash equilibria, where the inequality is component-wise.

Then let $H = \contdiff{1}(\Delta, \RR_{>0}^S)$, and define for $c \in 
H$, $M > 
0$:
\[
[F_{M,c}(x)]_{ih} = x_{ih} + s \left( x_{ih} g_{ih}(x) + M \left( c_{ih}(x) - 
x_{ih} 
\sum_{k \leq n_i} c_{ik}(x) \right) \right)
\]
where $i \in I$, $h \in S_i$.
Note that for all $s > 0$, the fixed points of $F_{M,c}$ are the 
stationary points of a suitably generalized \eqref{eqn:RMD}.
In particular, if $c \in H$ is constant on $\Delta$, then the fixed 
points are exactly the mutation equilibria of \eqref{eqn:RMD} for a 
suitably chosen $\tilde{M}$.
It is clear that for a choice of $c \in H$, we can choose $s > 0$ such 
that for all $M \in (0,\epsilon_s)$, we have $F_{M,c}(\Delta) \subset 
\Delta$ and thus the set of fixed points is non-empty.
Therefore, we assume a suitable choice of $s>0$ (possibly depending on 
$c$).
For convenience, let us denote by $\fixedP{F_{M,c}}$ the set of fixed 
points of $F_{M,c}$ for $c \in H$ and $M > 0$, i.e.,
\[
\fixedP{F_{M,c}} = \{ x \in \Delta \,|\, F_{M,c}(x) = x \}  .
\]

From the definition of a mutation limit, we extract the main property 
and say that a set $X \subset \Delta$ has the property $(A)$, if
\begin{itemize}
\item[$(A)$] for all $c \in \setint{\Delta}$, there is a sequence of 
mutation equilibria $(x^M)_{M > 0} \subset \Delta$ that converges to an 
element of $X$.
\end{itemize}

We extend this notion to $F_{M,c}$ and say that a set $X \subset 
\Delta$ has the property $(A')$, if
\begin{itemize}
\item[{\propMut/}] for all $c \in H$ and open $U \supset X$, 
there is $M > 0$ such that $\fixedP{F_{M,c}} \cap U \neq \emptyset$.
\end{itemize}

\begin{remark*}
If is clear the a set $X$ has the property \propMut/ if and only if for 
every $c \in H$ there is a sequence $(x^M)_{M>0} \subset \Delta$ such 
that $(x^M)_{M>0}$ converges to an element of $X$ and every $x^M$ in 
the sequence satisfies $x^M \in \fixedP{F_{M,c}}$.
With this it is also clear that a set has the property $(A)$ if it has 
property \propMut/, due to $c \in \setint{\Delta}$ being equivalent to 
a constant function in $H$.
\end{remark*}

The proof of proposition \ref{prop:mutationlimitsingames} will proceed 
as follows:
We first show that $\nashEq/$ has the property \propMut/.
Next, we show that a set with the property \propMut/ contains a minimal 
set with that property, and that an analog but slightly modified result 
holds for the property $(A)$.
We then show that a minimal set with the property \propMut/ is 
connected, based on a proof by Kinoshita 
\cite{kinoshita_essential_1952}.
Thus, we have that $\nashEq/$ contains a minimal set with the 
property \propMut/, which must be contained in a connected component of 
$\nashEq/$.
Finally this set is connected and in particular has the property $(A)$ 
and hence contains a minimal connected set with the property $(A)$, 
proving proposition \ref{prop:mutationlimitsingames}.

\paragraph{Existence.}

We show first that any minimal set with the property \propMut/, must be 
contained in $\nashEq/$:

\begin{lemma}\label{lemma:NEhasA'}
Let $X \subset \Delta$ be minimal with the property \propMut/.
Then $X \subset \nashEq/$.
In particular, $\nashEq/$ has the property \propMut/. 
\end{lemma}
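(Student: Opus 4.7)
The proof rests on one key estimate: for every $c \in H$ and every sequence $(x^{M_n})$ with $M_n \to 0^+$ and $x^{M_n} \in \fixedP{F_{M_n,c}}$, every accumulation point lies in $\nashEq/$. Granted this, the \emph{in particular} clause follows from Brouwer's theorem together with the sequential reformulation of \propMut/ in the preceding remark, and the minimality statement follows by a short subset argument.

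First I would establish that every fixed point of $F_{M,c}$ lies in $\setint{\Delta}$: if $x_{ih} = 0$ then $[F_{M,c}(x)]_{ih} = sMc_{ih}(x) > 0$, contradicting $x = F_{M,c}(x)$. Dividing the fixed-point identity by $x^{M_n}_{ih} > 0$ then yields
\[
g_{ih}(x^{M_n}) \;=\; M_n \sum_{k \leq n_i} c_{ik}(x^{M_n}) \;-\; M_n \frac{c_{ih}(x^{M_n})}{x^{M_n}_{ih}} \;\leq\; M_n \sum_{k \leq n_i} c_{ik}(x^{M_n}),
\]
because the subtracted term is nonnegative. Since $c$ is continuous on the compact $\Delta$, the right-hand side tends to $0$ uniformly in $x$, so any convergent subsequence has its limit $x^*$ satisfying $g_{ih}(x^*) \leq 0$ for every $(i,h)$, i.e., $x^* \in \nashEq/$.

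For the \emph{in particular} clause: for fixed $c \in H$ the text already notes that $s$ can be chosen so that $F_{M,c}(\Delta) \subset \Delta$ for all sufficiently small $M > 0$; Brouwer then supplies a fixed point $x^M$. Picking any null sequence $M_n \to 0^+$ and a convergent subsequence produces a limit in $\nashEq/$ by the estimate, so the sequential characterisation of \propMut/ in the preceding remark shows that $\nashEq/$ itself has property \propMut/. For the minimality statement, let $X$ be minimal with property \propMut/. By the same sequential characterisation, for every $c \in H$ there is a sequence $(x^M)_{M>0} \subset \fixedP{F_{M,c}}$ converging to an element of $X$; the estimate forces that limit into $\nashEq/$, hence into $X \cap \nashEq/$. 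Thus $X \cap \nashEq/$ already satisfies \propMut/, and minimality of $X$ forces $X \cap \nashEq/ = X$, i.e., $X \subset \nashEq/$.

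The only (mild) subtlety is to invoke the sequential form of \propMut/ from the remark so that $M_n \to 0$ is genuinely part of what is asserted; the algebraic estimate itself uses only the positivity of the mutation weights $c_{ih}$ and the continuity of $c$ on the compact simplex $\Delta$.
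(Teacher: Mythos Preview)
Your proof is correct and follows the same overall strategy as the paper: show that every accumulation point of a sequence $(x^{M_n})$ of fixed points with $M_n \to 0$ lies in $\nashEq/$, and then use minimality to conclude that $X \cap \nashEq/ = X$. Your derivation of the key bound $g_{ih}(x^M) < M\sum_{k} c_{ik}(x^M)$ is in fact slightly cleaner than the paper's, since you observe once that every fixed point $x^M$ is interior and divide the fixed-point identity by $x^M_{ih}$ uniformly, whereas the paper distinguishes the cases $x^* \in \setint{\Delta}$ and $x^* \in \partial\Delta$ before arriving at the same inequality.
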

\begin{proof}
Assume that $X \not{\subset} \nashEq/$.
Let $c \in H$ and $(M_n)_{n \in \NN} \subset \RR_{>0}$ be a null 
sequence, and $(x^{M_n})_{n \in \NN} \subset \Delta$ convergent with 
limit $x^*$ with $x^{M_n} \in \fixedP{F_{M_n, c}}$ for all $n \in \NN$.
From our earlier note on the possibility of a constant choice of $s>0$ 
for all $n \in \NN$, and from the continuity of $g$ and $c$, we have 
that for all $i \in I$, $h \in S_i$, $x^*_{ih} g_{ih}(x^*) = 0$ holds. 

We now show that $x^* \in \nashEq/$:
If $x^* \in \setint{\Delta}$, then for all $i \in I$, $h \in S_i$, 
$x^*_{ih} g_{ih}(x^*) = 0$ implies 
$g_{ih}(x^*) = 0$, i.e., $x^* \in \nashEq/$.
If $x^* \in \partial \Delta$,
then let some $(i,h) \in S$ be such that $x^*_{ih} = 0$, and let 
$\tilde{c}_{i}
= \sup \left\{ \sum_{k \leq n_i} c_{ik}(x) \,|\, x \in \Delta \right\}$.
Then $\tilde{c}_{i} < \infty$ and for $M > 0$:
\begin{align*}
x^M_{ih} = [F_{M,c}(x^M)]_{ih}
{} = {} &
x^M_{ih} + s \left( x^M_{ih} g_{ih}(x^M) + M \left( c_{ih}(x^M) - x^M_{ih} 
\sum_{k \leq n_i} c_{ik}(x^M) \right) \right) 
\\
{} > {} &
x^M_{ih} + s \left( x^M_{ih} g_{ih}(x^M) - M x^M_{ih} 
\sum_{k \leq n_i} c_{ik}(x^M) \right)
\\
{} \geq {} &
x^M_{ih} + s x^M_{ih} \left( g_{ih}(x^M) - M \tilde{c}_{i} \right)
\\
\end{align*}

Therefore, we have for all $M>0$:
\begin{align*}
0
{} > {} &
s x^M_{ih} \left( g_{ih}(x^M) - M \tilde{c}_{i} \right)
&& \Leftrightarrow &
0
{} > {} &
g_{ih}(x^M) - M \tilde{c}_{i}
&& \Leftrightarrow &
M \tilde{c}_{i}
{} > {} &
g_{ih}(x^M)
\end{align*}
Therefore, with $M \rightarrow 0$, we have $g_{ih} (x^*) \leq 0$, and 
overall $x^* \in \nashEq/$.
Thus $X \cap \nashEq/$ has the property \propMut/ and so $X$ is not 
minimal, a contradiction.
From the fact that $x^* \in \nashEq/$, it is clear that $\nashEq/$ has 
the property \propMut/.
\end{proof}

\paragraph{Minimality.}

We first show that the existence of a set with the property \propMut/ 
implies the existence of a minimal such set, where the proof is fairly 
standard and adapted from \cite[Thm 7.3]{mclennan_advanced_2018}:

\begin{lemma}\label{lemma:minimalA'}
Let a compact set $X \subset \Delta$ have the property \propMut/.
Then it contains a minimal compact set with the property \propMut/.
\end{lemma}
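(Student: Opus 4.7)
The plan is to apply Zorn's lemma to the family of compact subsets of $X$ that have the property \propMut/, ordered by reverse inclusion. The entire proof reduces to checking that a decreasing chain of such sets has a lower bound, namely its intersection.

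More precisely, I would let $\mathcal{C} = \{ Y \subset X \mid Y \text{ compact and has property } \propMut/ \}$, which is non-empty since $X \in \mathcal{C}$, and equip it with set inclusion. For Zorn's lemma applied to the reverse order, I would consider an arbitrary totally ordered chain $\mathcal{D} \subset \mathcal{C}$ and define $X^* = \bigcap_{Y \in \mathcal{D}} Y$. First I would verify that $X^*$ is non-empty: every $Y \in \mathcal{D}$ is non-empty (otherwise \propMut/ applied with $U = \emptyset$ would demand a fixed point in the empty set), each $Y$ is compact in $\Delta$, and any finite subfamily of $\mathcal{D}$ has non-empty intersection because $\mathcal{D}$ is totally ordered; the finite intersection property in the compact space $\Delta$ then forces $X^* \neq \emptyset$. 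Compactness of $X^*$ is automatic as an intersection of closed sets inside a compact set.

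The key step is showing that $X^*$ inherits property \propMut/. Fix $c \in H$ and an open $U \supset X^*$. The crux is the standard compactness observation that some element of the chain is already contained in $U$: the sets $\{Y \setminus U \mid Y \in \mathcal{D}\}$ are compact, and their total intersection equals $X^* \setminus U = \emptyset$, so by the finite intersection property some finite subfamily has empty intersection; since the chain is totally ordered, one of its members $Y_0$ already satisfies $Y_0 \setminus U = \emptyset$, i.e., $Y_0 \subset U$. Applying property \propMut/ to $Y_0$ with the open set $U \supset Y_0$ then yields some $M > 0$ with $\fixedP{F_{M,c}} \cap U \neq \emptyset$, establishing \propMut/ for $X^*$. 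Hence $X^* \in \mathcal{C}$ is a lower bound of $\mathcal{D}$, and Zorn's lemma produces a minimal element of $\mathcal{C}$, which is the desired minimal compact subset of $X$ with property \propMut/.

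The only delicate point is the compactness argument that some $Y \in \mathcal{D}$ sits inside $U$; everything else is bookkeeping. I do not expect a genuine obstacle, since this is the usual Kuratowski-style reduction used throughout fixed-point theory (and indicated by the citation to \cite{mclennan_advanced_2018}).
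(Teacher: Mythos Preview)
Your proposal is correct and follows essentially the same approach as the paper: both apply Zorn's lemma to the compact subsets of $X$ with property \propMut/, ordered by reverse inclusion, and both verify that the intersection of a chain retains property \propMut/ via the finite intersection property applied to the sets $Y \setminus U$. The only cosmetic difference is that you argue directly (finding some $Y_0 \subset U$ and then invoking \propMut/ for $Y_0$), whereas the paper phrases the same step as a proof by contradiction.
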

\begin{proof} 
The proof is based on Zorn's lemma.
Let $C$ be the set of compact subsets of $X$ with the property 
\propMut/, 
i.e.,
\[
 C = \left\{ K \subset X \,|\, K \neq \emptyset \text{ and $K$ is 
 compact and has the property \propMut/} \right\},
\]
and order $C$ by reverse inclusion $\supset$.
Let $O \subset C$ be completely ordered.
Then $O$ has the finite intersection property, as it is completely 
ordered by reverse inclusion and its elements are compact.
Therefore, $K_\infty := \bigcap O \neq \emptyset$ and $K_\infty$ is 
compact.

It remains to show that $K_\infty$ has the property \propMut/:
Assume $K_\infty$ does not have the property \propMut/.
Then there is a $c \in H$ and an open neighbourhood $V$ of 
$K_\infty$ such that no $F_{M,c}$ ($M > 0$) has a fixed point in $V$.
For $L \in O$, we have $L \not\subset V$ because $L$ has the property 
\propMut/.
Then $O' := \{ L \backslash V : L \in O \}$ is a completely ordered 
collection of compact sets ($L$ is compact and $V$ is open) with
the finite intersection property, inherited from the reverse inclusion 
ordering of $O$.
Therefore, it has a nonempty intersection $K'_\infty \subset K_\infty 
\subset V$ but $K'_\infty \cap V = \emptyset$, which is a contradiction.
Thus, $K_\infty$ has the property \propMut/ and therefore $K_\infty \in 
C$ is an upper bound of $O$.

With Zorn's lemma then, $C$ has a maximal element, which is a minimal 
compact subset of $X$ with the property \propMut/.
\end{proof}

For the existence of a mutation limit we will have to make a 
similar step, however preserving connectedness:

\begin{lemma}\label{lemma:minimalconnectedA}
Let a connected compact set $X \subset \Delta$ have the property $(A)$.
Then it contains a minimal connected compact set with the property 
$(A)$, i.e., a mutation limit.
\end{lemma}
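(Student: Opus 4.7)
The plan is to apply Zorn's lemma to the collection
\[
\mathcal{C} = \{\, K \subset X \mid K \neq \emptyset,\ K \text{ compact and connected, and } K \text{ has } (A) \,\}
\]
ordered by reverse inclusion. Since $X \in \mathcal{C}$, the collection is nonempty, and a maximal element of $(\mathcal{C}, \supset)$ is the same as a $\subset$-minimal element, i.e., a connected compact subset of $X$ with $(A)$ admitting no proper subset sharing these properties, which is exactly the definition of a mutation limit. To apply Zorn it thus suffices to show every chain $O \subset \mathcal{C}$ has an upper bound, and the natural candidate is $K_\infty := \bigcap O$.

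The easier closure properties come first. Nonemptyness and compactness of $K_\infty$ follow from the finite intersection property, since $O$ is totally ordered under $\supset$ and consists of compact subsets of the Hausdorff space $\Delta$. Connectedness of $K_\infty$ is the standard topological fact that the intersection of a nested family of compact connected sets in a Hausdorff space is connected: any separation $K_\infty = A \sqcup B$ would, by normality of $\Delta$, be covered by disjoint open sets $U \supset A$ and $V' \supset B$, and FIP applied to the nonempty family $\{\, L \setminus (U \cup V') : L \in O \,\}$ would then yield some $L \in O$ contained in $U \cup V'$, separating the connected $L$ into the nonempty disjoint pieces $L \cap U$ and $L \cap V'$, a contradiction.

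The substantive step is showing $K_\infty$ inherits property $(A)$, for which I would closely mirror the argument of lemma \ref{lemma:minimalA'}. I would first observe that, by compactness of $\Delta$, property $(A)$ for a compact set is equivalent to: for every $c \in \setint{\Delta}$, every open $V \supset K_\infty$, and every $\epsilon > 0$, some mutation equilibrium for $c$ with rate in $(0, \epsilon)$ lies in $V$ (the forward direction is clear; the reverse uses shrinking neighbourhoods of $K_\infty$ together with compactness of $\Delta$ to extract a convergent subsequence). If this fails for $K_\infty$, fix such $c$, $V$, $\epsilon$. For each $L \in O$, property $(A)$ for $L$ yields a sequence of mutation equilibria $(x^{M_n})_{n \in \NN}$ for $c$ with $M_n \to 0$ converging to some $x^*_L \in L$; for $n$ large the assumption forces $x^{M_n} \in \Delta \setminus V$, and by closedness $x^*_L \in L \cap (\Delta \setminus V)$. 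The chain $\{\, L \cap (\Delta \setminus V) : L \in O \,\}$ of nonempty compact sets then has nonempty intersection $K_\infty \cap (\Delta \setminus V)$ by FIP, contradicting $K_\infty \subset V$.

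The main obstacle compared to lemma \ref{lemma:minimalA'} is the simultaneous passage of both connectedness and of property $(A)$ to the chain intersection: without preserving connectedness the Zorn chase would only produce a compact set with $(A)$, not a mutation limit, and without preserving $(A)$ we would lose the game-theoretic content. The separation-of-opens argument sketched above is the key ingredient that turns the standard nested-connected-compacta fact into something usable in this setting; once it is in place, the rest is a direct adaptation of the earlier lemma, and Zorn's lemma delivers the required mutation limit.
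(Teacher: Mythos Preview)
Your proof is correct and follows essentially the same approach as the paper: Zorn's lemma applied to the family of nonempty compact connected subsets of $X$ with property $(A)$, showing that the intersection of a chain is again nonempty, compact, connected, and has $(A)$. The paper handles the $(A)$ step by simply saying it is ``completely analogous'' to lemma~\ref{lemma:minimalA'}; your neighbourhood reformulation of $(A)$ makes that analogy precise, and your connectedness argument via FIP on $\{L \setminus (U \cup V')\}$ is just a repackaging of the paper's open-cover extraction of a finite subcover.
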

\begin{proof}
Let $C$ be the set of all compact connected (non-empty) subsets of $X$ 
with the property $(A)$, partially ordered by $\supset$ and $O$ a 
completely ordered chain in $C$.
Then $K_\infty = \bigcap_{K \in O} K$ is non-empty, compact and has the 
property $(A)$ by an argument completely analogous to the previous 
lemma.

It remains to show that $K_\infty$ is connected:
Assume that $K_\infty$ is not connected.
Then, there are open disjoint sets $U_1$, $U_2$, with $K_\infty \subset 
U_1 \cup U_2 =: U$, with $U$ open in $X$.
$X$ and all $K \in O$ are compact and, with $X$ being Hausdorff, also 
closed.
Thus $X \setminus K$ is open in $X$ for $K \in O$.
Then, with
$
\bigcup_{K \in O} X \setminus K = X \setminus \bigcap_{K \in O} K = X 
\setminus K_\infty,
$
we have that $\{ U \} \cup \{ X \setminus K | K \in O \}$ is an open 
cover of $X$,
and there is a finite subcover $\{ U \} \cup \{ X \setminus K_i | K_i 
\in O, 1 \leq i \leq n \}$, as $X$ is compact.
Thus $X = U \cup \bigcup_{1 \leq i 
\leq n} X \setminus K_i = U \cup X \setminus \bigcap_{1 \leq i 
\leq n} K_i $. As $O$ is completely ordered by inclusion, we can assume 
that $K_1 \supset K_i$ ($1\leq i \leq n$) and we have that $X = 
U \cup X \setminus K_1 $.
Thus $K_1 \subset U = U_1 \cup U_2$, and hence $K_1$ is not connected, 
a contradiction.
Therefore, $K_\infty$ is connected and $K_\infty \in C$.
With Zorn's lemma the statement of the lemma follows.
\end{proof}

\paragraph{Connectedness.}
We gain connectedness as a necessary property of minimal sets with 
the property \propMut/, where the main idea of the proof is based on a 
proof by Kinoshita \cite{kinoshita_essential_1952} and relies on the 
``convexity'' of $H$:

\begin{lemma}\label{lemma:minA'connected}
If $K$ has the property \propMut/ and $K = (K_1 \cup \ldots \cup K_s)$ 
with the $K_j$ disjoint and compact, then some $K_j$ has the property 
\propMut/.
\end{lemma}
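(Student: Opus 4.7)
The plan is to argue by contradiction: assume that none of the $K_j$ has property \propMut/. Then for each $j \in \{1, \ldots, s\}$ there exist $c_j \in H$ and an open neighbourhood $W_j \supset K_j$ such that $\fixedP{F_{M,c_j}} \cap W_j = \emptyset$ for every $M > 0$. The crucial observation is that the fixed-point equation defining $\fixedP{F_{M,c}}$ at a point $x$ depends on $c$ only through the pointwise value $c(x)$. Therefore, if one can assemble a single $c \in H$ that coincides with $c_{j}$ on some neighbourhood of $K_{j}$ for each $j$, then any fixed point of $F_{M,c}$ sitting near $K_{j_0}$ will automatically be a fixed point of $F_{M,c_{j_0}}$ near $K_{j_0}$, contradicting the choice of $(c_{j_0}, W_{j_0})$.

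The construction is by means of a $\contdiff{1}$ partition of unity. Since the $K_j$ are compact and pairwise disjoint in $\Delta$, I would first choose pairwise disjoint open sets $U_j$ with $K_j \subset U_j \subset W_j$, and then open sets $V_j$ with $K_j \subset V_j \subset \overline{V_j} \subset U_j$. Next, I would construct $\contdiff{1}$ functions $\lambda_1, \ldots, \lambda_s, \lambda_0 : \Delta \to [0,1]$ summing to $1$, with $\mathrm{supp}(\lambda_j) \subset U_j$ and $\lambda_j \equiv 1$ on $V_j$ for $j \geq 1$ (and $\lambda_0 = 1 - \sum_{j \geq 1} \lambda_j$, which is nonnegative because the $U_j$ are disjoint). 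Fixing any auxiliary $c_0 \in H$, I would then set
\[
c(x) \;:=\; \lambda_0(x)\, c_0(x) \;+\; \sum_{j=1}^{s} \lambda_j(x)\, c_j(x) \,.
\]
As a $\contdiff{1}$ convex combination of strictly positive $\contdiff{1}$ maps, $c$ belongs to $H$, and by construction $c \equiv c_j$ on $V_j$. Applying property \propMut/ of $K$ to this $c$ and to the open neighbourhood $U := \bigcup_{j=1}^{s} V_j \supset K$, I obtain some $M > 0$ and a point $x^M \in \fixedP{F_{M,c}} \cap U$, which lies in $V_{j_0}$ for some $j_0$. Since $c(x^M) = c_{j_0}(x^M)$, the expression $F_{M,c}(x^M)$ coincides with $F_{M,c_{j_0}}(x^M)$, hence $x^M \in \fixedP{F_{M,c_{j_0}}} \cap V_{j_0} \subset \fixedP{F_{M,c_{j_0}}} \cap W_{j_0}$, contradicting the defining property of $(c_{j_0}, W_{j_0})$.

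The main obstacle is designing the patched $c$ so that it agrees with $c_j$ \emph{pointwise} (not merely approximately) on a neighbourhood of $K_j$, while still lying in $H$. A naive convex combination is insufficient because $\lambda c_1 + (1-\lambda) c_2$ only equals $c_1$ where $\lambda \equiv 1$, so the argument hinges on producing plateau functions $\lambda_j$ with $\lambda_j \equiv 1$ on an open set containing $K_j$ and vanishing outside $U_j$; this is the role of the disjoint nested pair $V_j \Subset U_j$ above, together with a standard smooth Urysohn-type construction on $\Delta$. Once $c$ is in place, the remaining steps are purely set-theoretic and use only the hypothesis \propMut/ for $K$ together with the pointwise nature of the fixed-point equation.
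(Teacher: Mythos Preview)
Your proposal is correct and follows essentially the same approach as the paper: the paper also argues by contradiction, chooses nested neighbourhoods $V_j \Subset U_j$ of the $K_j$, constructs a $\contdiff{}$ partition of unity subordinate to $\{U_0, U_1, \ldots, U_s\}$ (with $U_0$ covering the complement) so that the convex combination $\bar{c} = \sum_j \pi_j c_j \in H$ coincides with $c_j$ on $V_j$, and then observes that $F_{M,\bar{c}}$ can have no fixed point in $V_1 \cup \ldots \cup V_s \supset K$, contradicting \propMut/. The only cosmetic difference is that the paper obtains the plateau property $\pi_j|_{V_j} \equiv 1$ implicitly from the geometry of the cover, whereas you build the bump functions directly via a smooth Urysohn construction.
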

\begin{proof}
Let $K$ have property \propMut/ and $K = K_1 \cup \ldots \cup K_s$ with 
the $K_j$ disjoint and compact. 
Assume that no $K_j$ has the property \propMut/.
Then there are $c_1, \ldots, c_s \in H$ and neighbourhoods
$U_1, \ldots, U_s$ of $K_1, \ldots, K_s$ with disjoint closures
such that for all $M > 0$, $\mathscr{F}(F_{M,c_j}) \cap U_j = 
\emptyset$. 
Let further $V_1, \ldots, V_s$ be strictly smaller neighbourhoods, 
i.e., $ \overline{V}_j \subsetneq U_j $,
and let $U_0$ be a neighbourhood of
$\Delta \backslash (U_1 \cup \ldots \cup U_s)$ whose closure is 
disjoint from the $V_1, \ldots, V_s$, and $c_0$ any function in $H$.
Then $\{U_0, U_1, \ldots, U_s\}$ is an open cover of $\Delta$ and 
with $\Delta$ being a compact subset of a topological vector space, 
there is a $C^\infty$-partition of unity
$\pi_0, \pi_1, \ldots, \pi_s$ such that
$ \pi_j(x) = 0$ ($\forall x \in \Delta \setminus U_j$),
and $\sum_{j = 0}^{s} \pi_j(x) = 1 $ (${\forall x \in \Delta}$),
e.g., \cite[Thm 6.2]{mclennan_advanced_2018}.
The convex combination, $\bar{c}$, with
$\bar{c}: x \mapsto \sum_{j = 0}^{s} \pi_j(x) c_j(x)$, 
is an element of $H$.
Considering $F_{M, \bar{c}}$, we then have that $F_{M, \bar{c}}(x) = 
F_{M, c_j}(x)$ for $x \in V_j$.
Thus $\fixedP{F_{M, \bar{c}}} \cap V_j = \emptyset $ for
$1 \leq j \leq s$ for all $M > 0$.
Therefore, $F_{M, \bar{c}}$ has no fixed points in
$(V_1 \cup \ldots \cup V_s) \supset K$ for any $M > 0$.
This is a contradiction to the assumption that $K$ has the property 
\propMut/.
\end{proof}

This result gives us the connectedness of minimal sets with the 
property \propMut/:

\begin{coroll}\label{coroll:minA'connected}
Let $X$ be minimal with the property \propMut/. Then $X$ is connected.
\end{coroll}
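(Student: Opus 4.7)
The plan is to derive this corollary as an immediate consequence of Lemma \ref{lemma:minA'connected}, using the fact that a disconnected compact subset of a Hausdorff space can be written as a disjoint union of two nonempty compact pieces. I would argue by contradiction: assume $X$ is minimal with property \propMut/ but not connected.

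First I would use the definition of disconnectedness to write $X = X_1 \cup X_2$, where $X_1, X_2$ are nonempty, disjoint, and relatively closed in $X$. Since $X$ is compact and $\Delta$ (equivalently, the ambient space) is Hausdorff, the sets $X_1, X_2$ are themselves compact as closed subsets of a compact set. This puts us squarely in the setting of Lemma \ref{lemma:minA'connected} with $s = 2$.

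Next I would apply Lemma \ref{lemma:minA'connected} to $X = X_1 \cup X_2$. The lemma yields that at least one of $X_1, X_2$, say $X_j$, has property \propMut/. But $X_j \subsetneq X$ since both $X_1$ and $X_2$ are nonempty and disjoint, so $X_j$ is a strictly smaller compact subset of $X$ having property \propMut/, contradicting the minimality of $X$. Hence $X$ must be connected.

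The main obstacle in this argument is essentially absent, because the heavy lifting (the partition-of-unity construction producing $\bar{c} \in H$ that forces a fixed point in one of the prescribed neighbourhoods) has already been carried out in Lemma \ref{lemma:minA'connected}. The only subtlety worth double-checking is that the two pieces of a disconnection of a compact set are themselves compact, which follows from their being closed in $X$; once this is in hand, the corollary reduces to a one-line application of the lemma together with minimality.
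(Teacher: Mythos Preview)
Your argument is correct and is exactly the derivation the paper has in mind: the corollary is stated without proof immediately after Lemma~\ref{lemma:minA'connected}, and your contradiction argument (split a disconnected compact $X$ into two nonempty compact pieces, apply the lemma with $s=2$, and violate minimality) is the intended one-line consequence. The only implicit assumption you use---that a minimal set with property \propMut/ is compact---is consistent with how such sets arise in the paper via Lemma~\ref{lemma:minimalA'}.
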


Overall, this proves the following:
\begin{prop}
There is a mutation limit.
In particular, it is contained in a connected component of 
$\nashEq/$.
\end{prop}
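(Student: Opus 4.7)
The plan is to splice together the four preceding lemmas in the order already sketched at the start of the appendix. First, I observe that $\nashEq/ = \{ x \in \Delta \mid g(x) \leq 0 \}$ is compact, being the preimage of a closed set under the continuous $g$ inside the compact $\Delta$; by Lemma \ref{lemma:NEhasA'} it then has property \propMut/. Applying Lemma \ref{lemma:minimalA'} to $\nashEq/$ extracts a compact minimal subset $K \subset \nashEq/$ having property \propMut/, and Corollary \ref{coroll:minA'connected} guarantees that this $K$ is connected.

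Next I pass from \propMut/ to the weaker property $(A)$ via the remark that each $c \in \setint{\Delta}$ corresponds to a constant function in $H$ and is thus a special case of the $c$'s allowed in \propMut/; hence $K$ is a connected compact set with property $(A)$. A final application of Lemma \ref{lemma:minimalconnectedA} to $K$ produces a minimal connected compact subset $X \subset K$ with property $(A)$, which is by definition a \emph{mutation limit}. For the containment statement, $X \subset K \subset \nashEq/$ and $K$ is connected, so $K$ lies inside a single connected component of $\nashEq/$, and so does $X$.

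All the heavy lifting has been done in the preceding lemmas, so no genuinely new obstacle arises here; the proof is essentially an assembly. The one subtlety worth flagging is the need to minimise in two stages with respect to two different properties: first under \propMut/ (to invoke Corollary \ref{coroll:minA'connected} and secure connectedness, which is unavailable by minimising under the weaker $(A)$ alone), and only afterwards under $(A)$ within the already-connected $K$, so as to satisfy the definition of mutation limit without losing either connectedness or inclusion in $\nashEq/$.
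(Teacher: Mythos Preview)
Your proof is correct and follows essentially the same approach as the paper: establish that $\nashEq/$ is compact and has property \propMut/, extract a minimal compact \propMut/-set (your $K$, the paper's $X'$), invoke Corollary \ref{coroll:minA'connected} for connectedness, pass to property $(A)$, and then minimise once more via Lemma \ref{lemma:minimalconnectedA}. Your added remark about why the two-stage minimisation is needed, and your explicit justification of the connected-component containment, are welcome clarifications that the paper leaves implicit.
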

\begin{proof}
With lemma \ref{lemma:NEhasA'}, $\nashEq/$ has the property 
\propMut/.
With $\nashEq/$ being compact due to $g \in C(\Delta, \RR^S)$ and 
$\nashEq/ \subset \Delta$, and with lemma \ref{lemma:minimalA'}, 
there is a minimal compact set $X' \subset \nashEq/$ with the 
property \propMut/.
Furthermore, with corollary \ref{coroll:minA'connected}, $X'$ is 
connected.
With the property \propMut/, $X'$ also has the property $(A)$.
With lemma \ref{lemma:minimalconnectedA}, $X'$ contains a minimal 
connected compact subset $X \subset X'$ with the property $(A)$.
By definition, $X$ is a mutation limit.
\end{proof}

\end{document}